\DeclareMathOperator{\supp}{supp}
\DeclareMathOperator{\arcsinh}{arcsinh}
\newcommand{\Z}{\mathbb{Z}}
\renewcommand{\kappa}{\varkappa}
\theoremstyle{plain}
\newtheorem{thm}{Theorem}[section]
\newtheorem{lemma}[thm]{Lemma}
\theoremstyle{definition}
\newtheorem{remark}[thm]{Remark}
\newtheorem{definition}[thm]{Definition}
\newtheorem{example}[thm]{Example}
\numberwithin{equation}{section}
\title{Simplicity of eigenvalues and non-vanishing of eigenfunctions
  of a quantum graph}
\author{Gregory Berkolaiko}
\author{Wen Liu}
\address{Department of Mathematics, Texas A\&M University, College Station, TX 77843-3368, USA}
\begin{document}

\newcommand\todo[1]{{{\bf To do:} \color{red}#1}}

\maketitle

\begin{abstract}
  We prove that after an arbitrarily small adjustment of edge lengths,
  the spectrum of a compact quantum graph with $\delta$-type vertex
  conditions can be simple.  We also show that the eigenfunctions,
  with the exception of those living entirely on a looping edge, can
  be made to be non-vanishing on all vertices of the graph.

  As an application of the above result, we establish that the secular
  manifold (also called ``determinant manifold'') of a large family of
  graphs has exactly two smooth connected components.
\end{abstract}

\section{Introduction}

A quantum graph is a metric graph equipped with a self-adjoint
differential operator (usually of Schr\"odinger type) defined on the
edges and matching conditions specified at the vertices. Every edge of
the graph has a length assigned to it.

One of the fundamental questions of the spectral theory is that of
presence in the spectrum of degenerate (or repeated) eigenvalues.  In
particular, it is usually the case that within a rich enough set of
problems, the problems \emph{with} degenerate eigenvalues form a small
subset.  In other words, unless a system has symmetries (which usually
force degeneracy in the spectrum, see, for example,
\cite{Wigner_grouptheory}), it is highly unlikely to have degenerate
eigenvalues.

Mathematically, a classical result by Uhlenbeck~\cite{Uhl_bams72} (see
also \cite{Uhl_ajm76} for a generalization) establishes generic
simplicity of eigenvalues of the Laplace-Beltrami operator on compact
manifolds, with respect to the set of all possible metrics on the
manifold.  Some generic properties of eigenfunctions are also
established. Since then, various extensions and generalizations of
this result have been proven for different circumstances (see, for
example, \cite{GuiLegSen_gsa14} and references therein).

On graphs, the question of simplicity of eigenvalues was considered by
Friedlander in \cite{Fri_ijm05}, who proved that the eigenvalues are
simple generically with respect to the perturbation of the edge
lengths of the graph.  The proof is based on perturbation theory and
applies to graphs with Neumann--Kirchhoff (NK) conditions only (see
Section~\ref{sec:definitions} for the definitions).  When this article
was in preparation, an outline of a shorter proof, under the same
conditions, was released by Colin de Verdi\`ere \cite{CdV_ahp15}.

In this work we consider a wider range of vertex conditions, namely
the $\delta$-type conditions on vertices of the graph.  Furthermore,
we also investigate the eigenfunctions, showing that generically they
do not vanish on vertices, unless this is unavoidable due to presence
of looping edges.  Both of these results are important in
applications, in particular all recent results on the number of zeros
of graph eigenfunctions assume both the simplicity of eigenvalues and
non-vanishing of eigenfunctions on vertices as a precondition (see
\cite{Ber_cmp08,BanOreSmi_incoll08,BanBerSmi_ahp12,Ban_ptrsa14,BerWey_ptrsa14}
and references therein).

In the proof, the simplicity of eigenvalues and non-vanishing of
eigenfunctions are tightly interconnected; each property is assisting in
the proof of the other (the proof is done by induction).  The proof is
geometric in nature and uses local modifications of the graph to reduce
it to previously considered case.  In Section~\ref{sec:secular} of the
paper we also consider an application of the result to the study of
the secular manifold of a graph, showing that for large classes of
graphs, the set of smooth points of the manifold has exactly two
connected components.

We remark that from the general consideration one can deduce the
result for generic choices of the vertex conditions.  The challenge is
to obtain it for a fixed choice of vertex conditions (and a generic
choice of edge lengths).  The existing proofs cannot be readily
re-used for this purpose.  While the original proof due to Friedlander
\cite{Fri_ijm05} is very technical, the simpler proof by Colin de
Verdi\`ere \cite{CdV_ahp15} relies on the properties of the so-called
``secular manifold'' for quantum graphs which does not exist for
general $\delta$-type conditions.  Finally, we mention a result of
Exner and Jex, where the simplicity of the ground state eigenvalue and
positivity of the corresponding eigenfunction was established for
graph with non-repulsive $\delta$-type conditions \cite{ExnJex_pla12}.

\section{Quantum graph Hamiltonian}
\label{sec:definitions}

We start by defining the quantum graph, following the notational
conventions of \cite{BerKuc_graphs}. Let $\Gamma=(V,E)$ be a connected
metric graph with a set of vertices $V=\{v_j\}$ and edges $E=\{e_j\}$.
Both sets $V$ and $E$ are assumed to be finite and the edges are of
bounded length.  We allow multiple edges between a given pair of
vertices and the edges that loop from a vertex to itself (see also
Remark~\ref{rem:vertex_deg2} below).

A function $f$ on $\Gamma$ is a collection of functions $f_e(x)$
defined on each edge $e$.  Consider the Laplace operator $H$ defined
by
\begin{equation*}
 H: f \mapsto -\dfrac{d^2 f}{dx^2},
\end{equation*}
acting on the functions that belong to the Sobolev $H^2(e)$ space on
each edge $e$ and satisfy the $\delta$-type boundary conditions with
coefficients $\alpha_v$ at the vertices of the graph,
\begin{equation}
\label{eq:cond_deltatype}
\begin{cases}
  f(x) \mbox{ is continuous at } v\\
  \sum_{e\in E_v}\dfrac{df}{dx_e}(v)=\alpha_vf(v),
\end{cases}
\end{equation}
where for each vertex $v$, the corresponding vertex condition
$\alpha_v$ is a fixed real number. The set $E_v$ is the set of edges
joined at the vertex $v$; by convention, each derivative at a vertex
is taken into the corresponding edge.  We will often encounter the
special case with $\alpha_v=0$, which is known as the
Neumann--Kirchhoff (NK) condition.  The special value
$\alpha_v=\infty$ should be taken to mean the Dirichlet condition
$f(v) = 0$.  Such condition will only be allowed at vertices of degree
1, as it effectively disconnects the edges if imposed at a vertex of
degree 2 or higher.  Conditions with $\alpha_v \neq 0, \infty$ will be
called \emph{Robin-type}.

\begin{remark}
  \label{rem:vertex_deg2}
  NK condition (equation (\ref{eq:cond_deltatype}) with $\alpha_v=0$)
  at a vertex of degree 2 is equivalent to $f$ being continuously
  differentiable at $v$.  Therefore, a graph with an NK vertex of
  degree 2 is equivalent to a graph which has no vertex at this
  location, just a continuous edge, see Fig.~\ref{fig:loop}.  We will
  often use this fact in reverse, choosing a point on an edge and
  declaring it to be a vertex of degree 2 with NK condition.  We will
  call such a vertex a \emph{trivial vertex}.

  Note that by introduction of such trivial vertices, a graph with
  multiple or looping edges may be converted into a \emph{simple}
  graph.
\end{remark}

\section{Main results}
\label{sec:results}

The question we address here is when is it typical (with respect to
variation of edge lengths) for a graph to have simple spectrum and to
have eigenfunctions that do not vanish on vertices of the graph.  To
motivate our results, we first consider examples which turn out to be
the only cases one needs to take special care about.

\begin{definition}
  \label{def:loop}
  A \emph{loop} is a chain of vertices
  \begin{equation*}
    v,\ v_1,\ \ldots \ v_n,\ v,
  \end{equation*}
  connected by edges, with each of the intermediate vertices
  $v_1,\ldots, v_n$ having degree 2.  We include the possibility of
  having $n=0$, in which case $v$ is connected to itself by a looping
  edge.
\end{definition}

By Remark~\ref{rem:vertex_deg2}, a looping edge is equivalent to a
loop with intermediate vertices with $\alpha_{v_j}=0$, see
Fig.~\ref{fig:loop}.

\begin{figure}
  \centering
  \includegraphics[width=10cm]{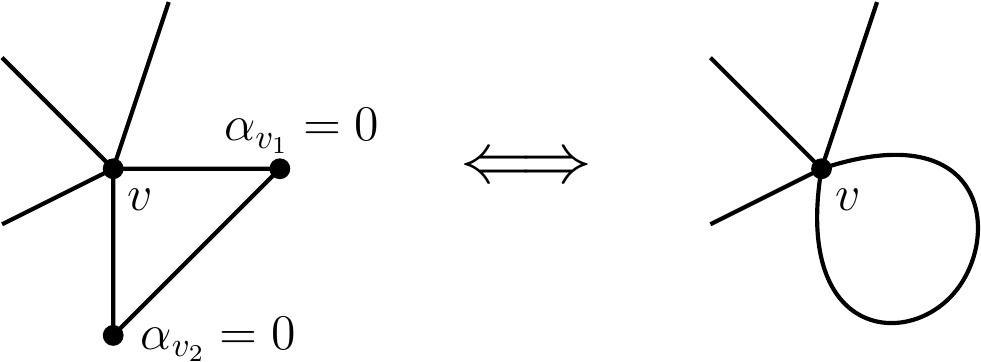}
  \caption{If the vertex conditions $\alpha_{v_1}=\alpha_{v_2}=0$ with
    $\deg(v_1)=\deg(v_2)=2$, the loop is equivalent to a looping edge.}
  \label{fig:loop}
\end{figure}

\begin{example}
  \label{ex:loop_graph}
  Let $L$ be a graph consisting of one looping edge with no vertices.
  We will call such a graph a \emph{circle}.
  By Remark~\ref{rem:vertex_deg2} it can be equivalently represented
  as a cycle graph (a number of vertices connected into a closed
  chain) with all vertices having $\alpha_v=0$.  It is easy to see
  that the spectrum of the graph is
  \begin{equation*}
    0,\ \left(\frac{2\pi}\ell\right)^2,\
    \left(\frac{2\pi}\ell\right)^2,\ 
    \left(\frac{4\pi}\ell\right)^2,\ 
    \left(\frac{4\pi}\ell\right)^2,
    \left(\frac{6\pi}\ell\right)^2, \ldots 
  \end{equation*}
  where $\ell$ is the length of the looping edge.  We note that the
  double degeneracies in the spectrum \emph{cannot} be resolved by
  changing the edge length.

  The eigenfunctions can be represented as
  \begin{equation}
    \label{eq:efun_loop}
    C_1 \cos(\sqrt{\lambda} x) + C_2 \sin(\sqrt{\lambda} x) = A
    \sin(\sqrt{\lambda} x + \theta),
  \end{equation}
  with constants $C_1$ and $C_2$ (or $A$ and $\theta$) arbitrary.
  Here $\lambda$ is the eigenvalue and the origin $x=0$ can be put in
  an arbitrary location on the graph.  It is important to note that for
  any eigenvalue $\lambda \neq 0$ and any point on the graph, there is
  an eigenfunction which vanishes at that point.
\end{example}

\begin{example}
  \label{ex:graph_with_loop}
  Consider a graph $\Gamma$ with a looping edge $L$, see
  Fig.~\ref{fig:loop}.  We assume that there are no other
  (non-trivial) vertices on the loop.  The condition at the
  attachment point $v$ is of $\delta$-type with arbitrary $\alpha_v$.

  If $\ell$ is the length of the loop, then $\left(2\pi n / \ell
  \right)^2$ is an eigenvalue of $\Gamma$ for any integer $n>0$.  We
  demonstrate this by constructing an eigenfunction of $\Gamma$.  On
  the loop we take the function $f$ to be equal to the eigenfunction
  of the corresponding circle, equation~(\ref{eq:efun_loop}),
  chosen to vanish at the attachment point $v$.  The function $f$ is
  extended to the rest of the graph $\Gamma$ by setting it to 0
  identically.  This obviously makes $f$ continuous and, since $f$ is
  an eigenfunction with respect to the loop,
  \begin{equation}
    \label{eq:eig_wrt_loop}
    \sum_{e\in E_v(\Gamma)}\dfrac{df}{dx_e}(v) 
    = \sum_{e\in E_v(L)}\dfrac{df}{dx_e}(v) 
    = 0 = \alpha f(v),
  \end{equation}
  \emph{for any} $\alpha$; the second summation is performed only over
  the edge-ends that belong to the loop L.  We thus have an
  eigenfunction of $\Gamma$ which is supported exclusively on the loop
  $L$; in particular it is zero on all vertices of $\Gamma$.
  Moreover, such an eigenfunction \emph{cannot} be destroyed by
  changing the lengths of graph $\Gamma$.  We also note, that the
  eigenfunction that is supported exclusively on a given loop is
  unique (for a given value of $\lambda$).  This can be easily seen as
  the eigenfunction satisfies the Dirichlet problem on the looping
  edge.
\end{example}

It turns out that having no other vertices on the loop is an
essential feature of Example~\ref{ex:graph_with_loop}.

\begin{lemma}
  \label{lem:support_is_not_a_circle}
  Let $\Gamma$ be a graph with $\delta$-type conditions at vertices.
  Suppose $L$ is loop in $\Gamma$ which has at least one vertex with
  $\alpha_v \neq 0$ on it, other than the attachment vertex.  Then
  there is a small modification of edge lengths of $\Gamma$, after
  which $\Gamma$ has no eigenfunctions $f$ supported exclusively on
  the loop $L$.
\end{lemma}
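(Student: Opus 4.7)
The plan is to translate the existence of an eigenfunction of $\Gamma$ supported exclusively on $L$ into a pair of real-analytic equations in the loop edge lengths and the spectral parameter, and then to show that the presence of a Robin vertex prevents these equations from being simultaneously satisfied for generic edge lengths; a Baire category argument then delivers the small perturbation.

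First I reduce to a transfer-matrix condition. Since $f$ is supported on $L$, continuity forces $f(v)=0$, and the $\delta$-condition at $v$ then reduces to $\sum_{e\in E_v(L)} df/dx_e(v)=0$. Parametrising $L$ by arclength $s\in[0,\ell]$ with both $s=0$ and $s=\ell$ identified with $v$, this reads $f'(0^+)=f'(\ell^-)$. Writing the loop's transfer matrix as
\begin{equation*}
T(\lambda;\ell_1,\ldots,\ell_{n+1}) = R(\ell_{n+1})M_n R(\ell_n)\cdots M_1 R(\ell_1) \in SL_2(\R),
\end{equation*}
with $R(\ell_i)$ the edge propagators for $-f''=\lambda f$ and $M_j = \bigl(\begin{smallmatrix}1 & 0 \\ \alpha_{v_j} & 1\end{smallmatrix}\bigr)$ the $\delta$-jump matrices, the existence of an eigenfunction supported on $L$ at $\lambda=k^2>0$ becomes $T(\lambda)(0,1)^{\top}=(0,1)^{\top}$, equivalently $T_{12}(\lambda)=0$ and $T_{22}(\lambda)=1$ (the relation $T_{11}=1$ follows from $\det T=1$). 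The case $\lambda\le 0$ is handled analogously via the same transfer-matrix condition.

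The crucial step is to show that for generic edge lengths these two equations have no common positive solution in $\lambda$. In the contrasting all-NK situation of Example~\ref{ex:graph_with_loop} a direct computation gives $T_{12}=\sin(k\ell)/k$ and $T_{22}=\cos(k\ell)$, whose common zeros $k\ell=2m\pi$ are \emph{independent} of the individual edge lengths; this is precisely the degeneracy responsible for the unremovable loop eigenfunctions there. With $\alpha_{v_j}\neq 0$ at some intermediate vertex, multiplying out the transfer matrix produces extra terms of the form $\alpha_{v_j}$ times products of trigonometric functions of the edge-length subsums on either side of $v_j$; I would verify by differentiating $T_{22}$ with respect to a length $\ell_i$ of an edge adjacent to $v_j$ and evaluating at a zero of $T_{12}$, that $T_{22}-1$ is a non-trivial real-analytic function along each Dirichlet branch of $T_{12}$. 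Consequently the joint zero set
\begin{equation*}
\mathcal{Z} = \{(\lambda,\ell) \in \R \times (0,\infty)^{n+1} : T_{12}=0,\ T_{22}=1\}
\end{equation*}
projects onto a countable union of proper real-analytic subvarieties of the edge-length space, so by the Baire category theorem its complement is dense; an arbitrarily small perturbation of the loop edge lengths therefore removes every eigenfunction of $\Gamma$ supported on $L$. Lengths of edges outside $L$ are irrelevant for such eigenfunctions.

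The main obstacle is the non-triviality claim above: one must argue quantitatively that the Robin vertex destroys the exact cancellation observed in the all-NK case, and that this holds uniformly across the infinitely many Dirichlet branches $\lambda_n(\ell)$. The cleanest route is to exhibit the leading $\alpha_{v_j}$-dependent term in $\partial T_{22}/\partial\ell_i$ restricted to $\{T_{12}=0\}$ and verify that it is not identically zero as a function of the remaining edge lengths; once this is done, the bad set for each branch is a proper subvariety and the Baire step is routine.
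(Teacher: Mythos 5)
Your reduction of the problem to the monodromy condition $T(\lambda)(0,1)^{\top}=(0,1)^{\top}$, i.e.\ $T_{12}(\lambda)=0$ and $T_{22}(\lambda)=1$, is correct (including the observation that $f'(0^+)\neq 0$ by unique continuation through the $\delta$-jumps, so the normalization is legitimate, and that only the loop's own lengths matter). However, the proposal has a genuine gap at exactly the point you flag as ``the main obstacle'': you never prove that $T_{22}-1$ is not identically zero along the branches of $\{T_{12}=0\}$. This is not a routine genericity check that can be deferred --- it is the entire content of the lemma. The all-NK computation you cite ($T_{12}=\sin(k\ell)/k$, $T_{22}=\cos(k\ell)$) shows that the analogous restriction \emph{is} identically $1$ on every other Dirichlet branch, so ``two analytic equations generically define a codimension-2 set'' cannot be invoked; one must exhibit, for an arbitrary configuration of intermediate vertices with at least one $\alpha_{v_j}\neq 0$, and uniformly over the infinitely many branches $\lambda_n(\ell)$, a direction in length space along which $T_{22}$ restricted to $\{T_{12}=0\}$ actually moves. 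Even in the simplest case of one Robin vertex, where $T_{12}=\sin(k\ell)/k+\alpha\sin(k\ell_1)\sin(k\ell_2)/k^2$ and $T_{22}=\cos(k\ell)+\alpha\sin(k\ell_1)\cos(k\ell_2)/k$, ruling out a common solution for generic $(\ell_1,\ell_2)$ requires an argument (the branch $\lambda_n(\ell)$ is only implicitly defined, so differentiating $T_{22}$ along it couples $\partial T_{12}/\partial\ell_i$ and $\partial T_{12}/\partial\lambda$); for $n$ Robin vertices the expansion of the product of transfer matrices has $2^n$ terms and the claimed non-degeneracy is left entirely unverified. As written, the proposal is a plan whose decisive step is missing.

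For comparison, the paper avoids this computation altogether by a soft, structural route. It first proves (Lemma~\ref{lem:simplicity_of_a_loop}) that a standalone impure loop graph has simple spectrum for a residual set of lengths, by induction on the number of Robin vertices: the base case uses the reflection symmetry of a loop with one vertex to separate the odd spectrum $\sqrt{\lambda}=2k\pi/\ell$ from the even spectrum, and the inductive step uses the strict interlacing of Theorem~\ref{thm:interlacing_delta}. Then, in the proof of Lemma~\ref{lem:support_is_not_a_circle}, the loop is detached from $\Gamma$ with an NK vertex at the former attachment point; simplicity makes the set of eigenfunctions countable up to scaling, so Lemma~\ref{lem:nonzero} yields a nearby reattachment point at which no eigenfunction of the detached loop vanishes. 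Any eigenfunction of the reassembled graph supported on the loop would have to vanish at the attachment point and be an eigenfunction of the detached loop, a contradiction. If you want to salvage your transfer-matrix approach, you would need to supply the missing non-degeneracy argument; the interlacing theorem is precisely the tool the paper uses to get the equivalent conclusion without it.
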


This lemma, proved in Section~\ref{sec:proof_lem_loop}, motivates the
following definition.

\begin{definition}
  A \emph{pure loop} is a loop with no vertices $v_j$ having
  $\alpha_{v_j}\neq 0$, other than, possibly, the attachment point
  $v$.  In fact, in what follows, by a ``loop'' we will always mean a pure
  loop, unless explicitly stated otherwise.  As mentioned already, a
  graph consisting of one pure loop is called a \emph{circle}; a graph
  consisting of an impure loop will be called an \emph{impure loop
    graph}.
\end{definition}

Now we are able to formulate our main result.

\begin{thm}
  \label{thm:main}
  Let $\Gamma$ be a connected graph with $\delta$-type conditions at
  vertices.  If $\Gamma$ is not equivalent to a circle, then,
  after a small modification of edge lengths, the new graph
  $\tilde{\Gamma}$ will satisfy the following genericity conditions
  \begin{enumerate}[(i)]
  \item \label{item:main_simple} $\sigma(\tilde{\Gamma})$ is simple, and
  \item \label{item:eigenfunctions} for each eigenfunction $f$ of $\tilde{\Gamma}$,
    \begin{enumerate}[(a)]
    \item \label{subitem:nonzero} either $f(v)\neq 0$ for each vertex $v$, or
    \item \label{subitem:supp_loop} $\supp f=L$ for only one loop $L$ of $\tilde{\Gamma}$.
    \end{enumerate}
  \end{enumerate}
  More precisely, in the space of all possible edge lengths, the set
  on which the above conditions are satisfied is residual (comeagre).
\end{thm}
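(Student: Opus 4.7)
My plan is a Baire-category reduction combined with induction on the complexity of $\Gamma$. View the edge lengths as a vector $\ell\in\R_+^{|E|}$, and for each $N\ge 1$ let $\mathcal{R}_N\subseteq\R_+^{|E|}$ be the set of $\ell$ for which conditions~(\ref{item:main_simple}) and~(\ref{item:eigenfunctions}) hold when restricted to the first $N$ eigenvalues. Continuous dependence of the first $N$ eigenvalues and the corresponding eigenprojectors on $\ell$ --- standard for compact quantum graphs with $\delta$-type conditions --- makes each $\mathcal{R}_N$ open, and $\mathcal{R}=\bigcap_N\mathcal{R}_N$, so residuality will follow from Baire's theorem once each $\mathcal{R}_N$ is shown to be dense.

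Density is proved by induction on a complexity invariant of $\Gamma$ (for concreteness, the lexicographically ordered pair $(|E|,|V|)$). The base cases are small graphs such as a single interval, where the spectrum is simple with explicit sinusoidal non-vanishing eigenfunctions, together with the impure loop graph, for which Lemma~\ref{lem:support_is_not_a_circle} supplies exactly the needed density statement. The pure circle is excluded by hypothesis, so the induction has a firm starting point.

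For the inductive step, fix $N$ and suppose $\ell^{(0)}\notin\mathcal{R}_N$. Then some $\lambda\le\lambda_N$ either has multiplicity $\ge 2$ or admits an eigenfunction $f$ that vanishes at a vertex $v$ with $\supp f$ not a single pure loop. Taking linear combinations in a degenerate eigenspace yields an eigenfunction vanishing at any prescribed vertex, so the first case reduces to the second. Once $f(v)=0$, the $\delta$-condition at $v$ reduces to $\sum_{e\in E_v}f'_e(v)=0$ and each component $f_e$ solves a Dirichlet problem at $v$; consequently $f$ can be viewed as an eigenfunction of a strictly simpler ``cut'' subgraph $\Gamma'$ obtained by splitting $v$ and retaining only $\supp f$, the new degree-$1$ vertices carrying Dirichlet conditions (which are permitted). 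Applying the inductive hypothesis to $\Gamma'$ together with the Hadamard-type variational formula for $d\lambda/d\ell_e$ on quantum graphs, I perturb the length of an edge inside $\Gamma'$ so that $\lambda$ is no longer a Dirichlet eigenvalue of $\Gamma'$, destroying the bad eigenfunction while disturbing other eigenvalues by only an arbitrarily small amount.

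The main obstacle I anticipate is ensuring that a single small perturbation removes one bad feature without introducing a new one among the first $N$ eigenvalues; this is handled by observing that the set of perturbations curing any given bad feature is itself open and dense, so the finite intersection over the finitely many bad features remains open-dense. The conceptually subtler obstacle is pinning down the unavoidable exceptional case $\supp f=L$ of~(\ref{subitem:supp_loop}): this is exactly the content secured by Lemma~\ref{lem:support_is_not_a_circle}, which forbids stubborn eigenfunctions on impure loops and thereby confines the exceptional set to pure loops. The interleaving of simplicity and non-vanishing in the induction is essential: non-vanishing eigenfunctions at the endpoints of an edge force length perturbations to move the corresponding eigenvalue transversally, while simplicity is what makes the vanishing pattern of an eigenfunction well-defined and thus usable as an inductive invariant.
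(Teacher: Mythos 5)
Your outer framework---reduce to density via openness of the ``first $N$ eigenvalues are good'' condition and Baire's theorem---is exactly the paper's Remark~\ref{rem:residual}, and an induction on graph complexity is also the paper's strategy. But the core of your inductive step has a genuine gap. You claim that once $f(v)=0$ you may pass to a ``strictly simpler cut subgraph $\Gamma'$'' by Dirichlet-splitting $v$ and retaining $\supp f$. In the main case this graph is not simpler: if $f$ is not identically zero on any edge incident to $v$ (which is generic, and is exactly the hard case), then $\supp f$ can be all of $\Gamma$, the split graph has the same edge set and \emph{more} vertices, and your lexicographic invariant $(|E|,|V|)$ does not decrease. The inductive hypothesis therefore gives you nothing. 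This is precisely the situation the paper spends most of its effort on (Part~(ii), Case~2 of the proof), and it is resolved not by cutting but by sliding the vertex $v$ to the level set $f_e=\epsilon$ on each incident edge, which turns the NK condition into a $\delta$-condition with coefficient $\alpha_v'(\epsilon)=O(\epsilon)$, and then comparing the two families $\Gamma'(\epsilon)$ and $\Gamma''(\epsilon)$ via the eigenvalue correspondence. Nothing in your proposal substitutes for this step.

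The final perturbation step is also not justified. You want to perturb a length so that $\lambda$ ``is no longer a Dirichlet eigenvalue of $\Gamma'$,'' but $\Gamma'$ shares its edge lengths with $\Gamma$, so $\lambda_n(\Gamma)$ and $\sigma(\Gamma')$ move \emph{together} under the perturbation; by Theorem~\ref{thm:interlacing_delta} they interlace, and the interlacing is strict only when the eigenfunction is already known to be non-vanishing (or to have non-vanishing derivative sum) at $v$ --- which is what you are trying to prove, so the argument is circular without an independent transversality input. The paper avoids this by carefully splitting the graph into two pieces whose spectra can be made disjoint independently (tree case), by a flip-symmetry argument showing the unavoidable loop eigenvalues $(2\pi n/\ell)^2$ remain simple in the full graph (a point your proposal does not address at all, though it is needed to reconcile~(\ref{item:main_simple}) with~(\ref{subitem:supp_loop})), and by the sub-induction on Robin-type vertices. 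A minor point: the base case for an impure loop graph is Lemma~\ref{lem:simplicity_of_a_loop}, not Lemma~\ref{lem:support_is_not_a_circle}, which concerns loops sitting inside a larger graph.
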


\begin{remark}
  \label{rem:residual}
  A \emph{residual} or \emph{comeagre} set is a countable intersection
  of sets with dense interiors.  Informally, a residual set is
  ``large''.  In particular, since all spaces we will be dealing with
  (namely, the space of all possible lengths or the space of all
  points on a graph) are complete metric spaces, by Baire Category
  Theorem a residual set is dense.  In particular, it implies that the
  adjustments in the edge lengths, promised in the Theorem, can be
  chosen arbitrarily small.

  In general, the converse is not true, a dense set is not necessarily
  residual (for example, the set of rationals is dense yet meagre).
  But in our case, it is in fact enough to establish the Theorem for a
  dense set of lengths.  Indeed, once that has been established, one
  can argue as follows.  For any $N$, the set of lengths on which the
  conclusions of the Theorem are true for all $n < N$ is still dense
  but it is also open: both eigenvalues and eigenfunctions are
  continuous under change of lengths
  \cite{BerKuc_incol12,BerKuc_graphs} and a small perturbation will
  preserve strict inequalities between the eigenvalues, and between values
  of $f(v)$ and 0.  Taking the intersection of open dense sets over
  countably many $N$ we obtain a residual set.
\end{remark}

\section{Preliminary observations}

Before we start the proof of the main theorem, we make some observations.

\begin{remark}     
  \label{rem:lambda0}
  If $\lambda=0$ and a graph has no Robin-type vertices (i.e.\ only
  $\alpha_v=0$ or $\alpha_v=\infty$ are allowed), it can be easily
  shown (see, e.g. \cite[Thm 1]{Kur_ark08}) that the corresponding
  eigenfunction $f$ of $\Gamma$ is a constant on every edge.
  Therefore the multiplicity of 0 in the spectrum is at most the
  number of connected components in the graph, which is one in our
  case.  If there is a vertex with Dirichlet condition $\alpha_v=0$,
  the value 0
  is not an eigenvalue.  Hence, Theorem~\ref{thm:main} holds for
  $\lambda=0$ on graphs with no Robin-type conditions.

  The same proof does not apply to graphs with some $\alpha_v \neq 0,
  \infty$, which introduces a layer of complication into the proof of
  our main result.
\end{remark} 

The proof of Theorem~\ref{thm:main} is built around modifications made
to the structure of a graph.  The following theorem describes one of
the modification we find useful and its effect on the spectrum.  We
denote by $\Gamma_{\alpha}$ a compact quantum graph with a
distinguished vertex $v$.  Arbitrary self-adjoint conditions are fixed
at all vertices other than $v$, while $v$ is endowed with the
$\delta$-type condition with coefficient $\alpha$.

\begin{thm}[Berkolaiko--Kuchment \cite{BerKuc_incol12} and
  {\cite[Thm 3.1.8]{BerKuc_graphs}}]
  \label{thm:interlacing_delta}
  Let $\Gamma_{\alpha'}$ be the graph obtained from the graph
  $\Gamma_{\alpha}$ by changing the coefficient of the condition at
  vertex $v$ from $\alpha$ to $\alpha'$. If
  $-\infty \leq \alpha < \alpha' \leq \infty$ (where $\alpha'=\infty$
  corresponds to the Dirichlet condition), then
  \begin{equation}
   \lambda_{n-1}(\Gamma_{\alpha'}) \leq \lambda_n(\Gamma_{\alpha})
   \leq \lambda_{n}(\Gamma_{\alpha'}).
  \end{equation}
  If the eigenvalue $\lambda_n(\Gamma_{\alpha})$ is simple and its
  eigenfunction $f$ is such that either $f(v)$ or $\sum f'(v)$ is
  non-zero, then the inequalities can be made strict,
  \begin{equation}
    \label{eq:interlacing_strict}
    \lambda_{n-1}(\Gamma_{\alpha'}) < \lambda_n(\Gamma_{\alpha})
    < \lambda_{n}(\Gamma_{\alpha'}).
  \end{equation}

  If $\alpha' < \alpha$, the inequalities are adjusted accordingly,
  \begin{equation}
    \lambda_{n}(\Gamma_{\alpha'}) < \lambda_n(\Gamma_{\alpha})
    < \lambda_{n+1}(\Gamma_{\alpha'}).
  \end{equation}
\end{thm}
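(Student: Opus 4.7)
The approach is variational. Associated to $\Gamma_\alpha$ is the quadratic form
\[
Q_\alpha[f] = \sum_{e\in E}\int_e |f'|^2\,dx + \sum_{u\ne v}\alpha_u\,|f(u)|^2 + \alpha\,|f(v)|^2,
\]
defined on the space $D_\alpha$ of continuous piecewise-$H^1$ functions on $\Gamma$; in the Dirichlet case $\alpha = \infty$ the domain contracts to $\{f\in D : f(v) = 0\}$ and the last term is dropped. The key observation is that, for $\alpha < \alpha' < \infty$, changing the coefficient is a sign-definite rank-one perturbation of the form:
\[
Q_{\alpha'}[f] - Q_\alpha[f] = (\alpha' - \alpha)\,|f(v)|^2 \ge 0.
\]

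The non-strict interlacing follows from min-max. The upper bound $\lambda_n(\Gamma_\alpha) \le \lambda_n(\Gamma_{\alpha'})$ is immediate from form monotonicity (noting $D_\infty \subset D_\alpha$ in the Dirichlet case). For the lower bound $\lambda_{n-1}(\Gamma_{\alpha'}) \le \lambda_n(\Gamma_\alpha)$, let $S_n$ denote the $n$-dimensional span of the first $n$ eigenfunctions of $\Gamma_\alpha$ and intersect it with the hyperplane $\ker(f \mapsto f(v))$. The intersection $\tilde S$ has dimension at least $n-1$, lies in $D_{\alpha'}$ for every $\alpha' \in (-\infty, \infty]$, and satisfies $Q_{\alpha'} = Q_\alpha$ pointwise; feeding $\tilde S$ into the min-max characterization of $\lambda_{n-1}(\Gamma_{\alpha'})$ closes the step. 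The case $\alpha' < \alpha$ follows by relabelling.

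For the strict versions, suppose for contradiction that $\lambda := \lambda_n(\Gamma_\alpha) = \lambda_n(\Gamma_{\alpha'})$ with $\alpha < \alpha'$ both finite and $\lambda$ simple with eigenfunction $f$. On the analytic family $\Gamma_\beta$, $\beta \in [\alpha, \alpha']$, non-strict monotonicity and the fixed endpoint values force $\lambda_n(\Gamma_\beta) \equiv \lambda$; simplicity at $\alpha$ persists on a right-neighbourhood by continuity of eigenvalues, and the Hellmann--Feynman (Hadamard) formula there gives
\[
0 = \frac{d}{d\beta}\lambda_n(\Gamma_\beta)\Big|_{\beta = \alpha} = \frac{|f(v)|^2}{\|f\|^2},
\]
so $f(v) = 0$. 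For finite $\alpha$ the $\delta$-type condition then forces $\sum f'(v) = \alpha f(v) = 0$, contradicting the non-vanishing hypothesis. The companion strict inequality $\lambda_{n-1}(\Gamma_{\alpha'}) < \lambda_n(\Gamma_\alpha)$ is obtained by the same argument in reverse, and the Dirichlet endpoint $\alpha' = \infty$ reduces immediately to the finite case: equality there, together with non-strict monotonicity, would give equality for any intermediate finite $\tilde\alpha \in (\alpha, \infty)$, contradicting the strict inequality already established. The main obstacle in this step is ensuring that the Hellmann--Feynman machinery applies---i.e.\ that simplicity of $\lambda_n(\Gamma_\beta)$ survives on a full right-neighbourhood of $\beta = \alpha$---which we obtain from continuity of eigenvalues, though one must track indices carefully in case of level crossings elsewhere on $[\alpha, \alpha']$.
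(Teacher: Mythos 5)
This statement is quoted in the paper from Berkolaiko--Kuchment \cite{BerKuc_incol12}, \cite[Thm 3.1.8]{BerKuc_graphs}; the paper itself contains no proof, so your proposal can only be judged on its own merits. The quadratic-form setup, the non-strict interlacing via min-max (restricting the span of the first $n$ eigenfunctions of $\Gamma_\alpha$ to the hyperplane $f(v)=0$), and the Hellmann--Feynman argument for the strict \emph{upper} inequality $\lambda_n(\Gamma_\alpha)<\lambda_n(\Gamma_{\alpha'})$ (for finite $\alpha$) are all correct.

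The gap is in the sentence ``the companion strict inequality $\lambda_{n-1}(\Gamma_{\alpha'})<\lambda_n(\Gamma_\alpha)$ is obtained by the same argument in reverse.'' It is not. Your upper-bound argument hinges on the fact that the single branch $\beta\mapsto\lambda_n(\Gamma_\beta)$ is monotone with \emph{equal endpoint values}, hence constant, which lets you differentiate it at $\beta=\alpha$ where you control the eigenfunction. The hypothesis $\lambda_{n-1}(\Gamma_{\alpha'})=\lambda_n(\Gamma_\alpha)$ equates values of \emph{two different index branches at two different parameter values}; monotonicity forces neither $\lambda_{n-1}(\Gamma_\beta)$ nor $\lambda_n(\Gamma_\beta)$ to be constant on any subinterval (e.g.\ $\lambda_{n-1}(\Gamma_\beta)$ may increase strictly from $\lambda_{n-1}(\Gamma_\alpha)<\lambda$ and hit $\lambda$ exactly at $\beta=\alpha'$). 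A Hellmann--Feynman computation at $\beta=\alpha'$ would moreover involve the eigenfunction of $\Gamma_{\alpha'}$, about which the theorem assumes nothing. You also leave the endpoint $\alpha=-\infty$ (Dirichlet at the bottom, where the relevant hypothesis is $\sum f'(v)\neq 0$ since $f(v)=0$ automatically) untreated. Both defects are repaired by an elementary observation that in fact supersedes the Hellmann--Feynman step entirely: the strict claims amount to $\lambda:=\lambda_n(\Gamma_\alpha)\notin\sigma(\Gamma_{\alpha'})$, which then upgrades the non-strict interlacing on both sides. Suppose $g$ is an eigenfunction of $\Gamma_{\alpha'}$ with eigenvalue $\lambda$. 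Green's identity applied to $f$ and $g$ (both solve $-u''=\lambda u$ and satisfy all vertex conditions away from $v$, plus continuity at $v$) leaves only the boundary term at $v$ and gives
\begin{equation*}
  \Bigl(\sum f'(v)\Bigr)\overline{g(v)} \;=\; f(v)\,\overline{\sum g'(v)},
  \qquad\text{i.e.}\qquad (\alpha-\alpha')\,f(v)\,\overline{g(v)}=0
\end{equation*}
when both coefficients are finite. Hence $f(v)=0$ (excluded by hypothesis) or $g(v)=0$; in the latter case $\sum g'(v)=\alpha' g(v)=0$, so $g$ satisfies the vertex condition of $\Gamma_\alpha$ as well, and simplicity of $\lambda_n(\Gamma_\alpha)$ forces $g\propto f$, contradicting $f(v)\neq 0$. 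The infinite endpoints are handled by the same identity, using the hypothesis $\sum f'(v)\neq 0$ in place of $f(v)\neq 0$. I recommend replacing your entire strictness argument with this one.
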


Another modification we will use is splitting a vertex into two.  A
vertex $v$ is replaced by two vertices, $v_1$ and $v_2$ which, among
them, split the set of edges originally incident to the vertex $v$:
$E_v$ is a disjoint union of $E_{v_1}$ and $E_{v_2}$.  The
$\delta$-type constant at the new vertices is chosen such that
$\alpha_{v_1}+\alpha_{v_2} = \alpha_v$ (usually $\alpha_{v_1}$ will be
taken to be 0).  The key observation here is that if an eigenfunction
$f$ satisfies the sum of derivatives condition
\eqref{eq:cond_deltatype} with respect to the subset $E_{v_1} \subset
E_v$, it will automatically satisfy it with respect to $E_{v_2}$ and
will therefore be an eigenfunction of the modified graph with the
additional property that $f(v_1) = f(v_2)$.  To arrive to a
contradiction, we will need to show that the latter is unlikely to
happen.

\begin{lemma}
  \label{lem:nonzero}
  Let $x$ be a point on the edge $e$.  Then, in any neighborhood of
  $x$ there is a residual set of $y$ such that for all eigenfunctions $f_n$
  \emph{either} $f_n(y) \neq 0$ \emph{or} $f_n\equiv0$ on the edge $e$.  

  Similarly, given a sequence of values $\{\phi_n\}$, all of them
  non-zero, there is a residual set of $y$ such that for all
  normalized eigenfunctions $f_n$ with $\lambda_n > 0$
  we have $f_n(y) \neq \phi_n$.
\end{lemma}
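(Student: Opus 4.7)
The plan is a direct application of the Baire category theorem inside a neighborhood $U \subset e$ of $x$, viewed as an open interval in $e$, hence a complete metric space. The key observation is that on each edge an eigenfunction $f_n$ satisfies $-f_n'' = \lambda_n f_n$ and so is an explicit real-analytic function of the arc-length parameter $t$: either affine (when $\lambda_n = 0$), or of the form $A_n \sin(\sqrt{\lambda_n}\,t + \theta_n)$ (when $\lambda_n > 0$). In particular, $f_n|_e$ is either identically zero or has only isolated zeros. Since the spectrum of a compact quantum graph is discrete with finite-dimensional eigenspaces, I would fix once and for all a countable orthonormal basis $\{f_n\}$ of eigenfunctions.

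For the first assertion, I would set
\[
Z_n := \{y \in U : f_n(y) = 0 \text{ and } f_n \not\equiv 0 \text{ on } e\}.
\]
By the analytic dichotomy above, each $Z_n$ is either empty or a closed discrete (hence nowhere dense) subset of $U$. Therefore $\bigcup_n Z_n$ is meagre in $U$, and its complement is residual in $U$; at any $y$ in this complement, every $f_n$ either vanishes identically on $e$ or is nonzero at $y$, as required.

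For the second assertion, I would run the analogous argument with
\[
W_n := \{y \in U : f_n(y) = \phi_n\}
\]
restricted to $\lambda_n > 0$. Writing $f_n|_e = A_n \sin(\sqrt{\lambda_n}\,t + \theta_n)$, if $A_n = 0$ then $f_n \equiv 0$ on $e$ and $W_n = \emptyset$ since $\phi_n \neq 0$; otherwise $f_n|_e - \phi_n$ is a non-constant analytic function on the interval, so $W_n$ is closed and discrete. The union $\bigcup_n W_n$ is again meagre in $U$, and its complement is the residual set we seek. No substantive obstacle arises; the only step requiring verification is that each $Z_n$ (respectively $W_n$) is nowhere dense, which is immediate from the explicit form of $f_n|_e$, since a non-trivial linear combination of the two independent solutions of $-u'' = \lambda u$ cannot accumulate zeros on an interval.
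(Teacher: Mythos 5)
Your proposal is correct and follows essentially the same route as the paper: both arguments reduce to the fact that a nonzero eigenfunction has only isolated zeros on an edge (you via the explicit form $A_n\sin(\sqrt{\lambda_n}\,t+\theta_n)$, the paper via the accumulation-point/uniqueness argument), and then take the complement of a countable union of nowhere dense sets. No gap.
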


\begin{proof}
  Fix a neighborhood of $x$.  Any eigenfunction $f$ that is not
  identically zero on $e$ has only finitely many zeros in the
  neighborhood: otherwise there is an accumulation point for zeros at
  which $f=f'=0$ and therefore $f\equiv0$.  The union of the zero
  points over all possible $n$ is a countable set; its complement is residual.

  The second part of the lemma is proved analogously, only one need
  not worry about $f_n$ vanishing identically on the edge.
\end{proof}

There is one example of a graph for which our modifications fail.  We
consider this example separately.

\begin{example}
  \label{ex:figure8}
  Consider the ``figure of 8'' graph with Neumann--Kirchhoff condition
  at the vertex of degree 4, shown in Fig.~\ref{fig:figure8}.  The
  lengths of the two loops will be denoted by $\ell_1$ and $\ell_2$.
  From Example~\ref{ex:graph_with_loop} we know that there two classes
  of ``loop eigenfunctions'' with eigenvalues $\left\{ (2\pi
    n/\ell_1)^2 \right\}_{n\geq1}$ and $\left\{ (2\pi
    n/\ell_2)^2\right\}_{n\geq1}$.  One can also construct an
  eigenfunction by starting with an eigenfunction on a circle of
  length $\ell_1+\ell_2$ and pinching the circle at the points of
  distance $\ell_1/2$ from a maximum (the eigenfunction takes equal
  values at these points) to create an eigenfunction of the figure of
  8 graph.  Therefore $\left\{ \left({2\pi n}/({\ell_1 +
        \ell_2})\right)^2 \right\}_{n\geq 0}$ are also eigenvalues.
  Symmetry considerations (there is a basis of eigenfunctions which
  are either even or odd with respect to the flip of a loop) show that
  the above choices exhaust the set of eigenfunctions.

  \begin{figure}
    \centering
    \includegraphics{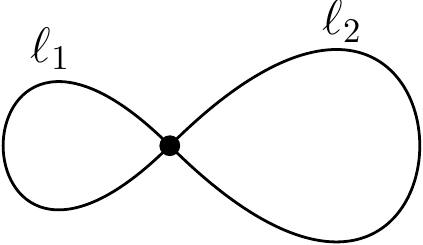}
    \caption{A "figure of 8" graph.  The spectrum is simple if and
      only if the lengths $\ell_1$ and $\ell_2$ are rationally
      independent.}
    \label{fig:figure8}
  \end{figure}

  Now, if the numbers $\ell_1$ and $\ell_2$ are rationally
  independent, the three sets above are mutually disjoint and the
  spectrum of the figure of 8 graph is simple.  Conclusion (ii) of
  Theorem~\ref{thm:main} also holds with the third class of
  eigenfunctions.
\end{example}

\section{Proofs of the main results}

\subsection{Proof of Lemma \ref{lem:support_is_not_a_circle}}
\label{sec:proof_lem_loop}

We begin by establishing the following auxiliary result.

\begin{lemma}\label{lem:simplicity_of_a_loop}
  Let $\Gamma$ be an impure loop graph, i.e. a graph consisting of one
  loop with at least one vertex with coefficient $\alpha_v \neq
  0$.  Then, for a residual set of edge lengths, the eigenvalues of
  the graph are simple.
\end{lemma}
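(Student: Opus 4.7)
I would prove the lemma by induction on the number $m$ of vertices of $\Gamma$ at which $\alpha_v \neq 0$; trivial NK vertices on the loop may be ignored by Remark~\ref{rem:vertex_deg2}.

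For the base case $m=1$, I would parametrize the loop as a single edge $[0,L]$ whose endpoints are identified at the Robin vertex (with coefficient $\alpha \neq 0$). An eigenfunction has the form $f(x) = A\cos(kx) + B\sin(kx)$, and the continuity condition $f(0) = f(L)$ together with the derivative-jump condition $f'(0) - f'(L) = \alpha f(0)$ leads, after a short calculation, to two disjoint families of eigenvalues: the ``Dirichlet'' family $k = 2\pi n/L$ (with eigenfunctions $\sin(kx)$ vanishing at the Robin vertex), and the zeros of the transcendental equation $2k\tan(kL/2) = \alpha$. These two families are disjoint because $\alpha \neq 0$ (note $\tan(n\pi)=0$), and each eigenvalue has a one-dimensional eigenspace, so the spectrum is in fact simple for every $L > 0$.

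For the inductive step with $m \geq 2$, pick any Robin vertex $v^*$ of $\Gamma$ and let $\Gamma^*$ be the graph obtained by converting the condition at $v^*$ from $\delta$-type with coefficient $\alpha_{v^*}$ to NK. By Remark~\ref{rem:vertex_deg2}, $\Gamma^*$ can be viewed as an impure loop with $m-1$ Robin vertices, in which the two edges of $\Gamma$ adjacent to $v^*$, of lengths $\ell_a$ and $\ell_b$, have merged into a single edge of length $\tilde\ell = \ell_a + \ell_b$. The crucial observation is that if $\lambda$ is a double eigenvalue of $\Gamma$, its two-dimensional eigenspace contains a nonzero eigenfunction $f$ with $f(v^*) = 0$ (take the combination $f_2(v^*)f_1 - f_1(v^*)f_2$ of any two independent eigenfunctions; if this vanishes, both $f_i$ already vanish at $v^*$). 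Because $\alpha_{v^*}f(v^*) = 0$, the $\delta$-condition at $v^*$ reduces to $f'(0) = f'(L)$ in the loop parametrization, i.e., $f$ is $C^1$ through $v^*$. Consequently $f$ is an eigenfunction of $\Gamma^*$ at the same $\lambda$, vanishing at the point $p$ previously occupied by $v^*$ on the merged edge.

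Combining this with the induction hypothesis finishes the argument. By the IH, for a residual set of $\Gamma^*$-lengths $(\tilde\ell, \ell_3, \ldots, \ell_{m+1})$ the spectrum of $\Gamma^*$ is simple; for each such choice, Lemma~\ref{lem:nonzero} applied to the merged edge yields a residual set of positions $p \in (0, \tilde\ell)$ at which no eigenfunction of $\Gamma^*$ vanishes, and under both conditions $\Gamma$ can have no double eigenvalue. The main obstacle is to convert this into a genuinely residual set of $\Gamma$-lengths $(\ell_a, \ell_b, \ell_3, \ldots, \ell_{m+1})$, and I would handle it using the open-dense formulation of Remark~\ref{rem:residual}: for each $N$, the set of $\Gamma$-lengths on which the first $N$ eigenvalues of $\Gamma$ are simple is open by continuity of the eigenvalues, so density suffices. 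To achieve density near an arbitrary $\ell^0$, I would first perturb $(\ell_b, \ell_3, \ldots, \ell_{m+1})$ with $\ell_a$ fixed to arrange (via the IH) that enough eigenvalues of $\Gamma^*$ are simple, and then perturb $\ell_a$ while keeping $\tilde\ell = \ell_a + \ell_b$ fixed (which does not alter $\Gamma^*$) to miss the finitely many zeros of the relevant $\Gamma^*$-eigenfunctions on the merged edge.
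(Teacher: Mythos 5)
Your proposal is correct and follows the same overall strategy as the paper: induction on the number of Robin-type vertices, an explicit computation for the one-vertex base case (your two families, $k=2\pi n/L$ and $2k\tan(kL/2)=\alpha$, are exactly the paper's odd/even families obtained via the reflection symmetry of the loop), an inductive step that converts one Robin vertex to NK and combines the inductive hypothesis with Lemma~\ref{lem:nonzero}, and the density-to-residual upgrade of Remark~\ref{rem:residual}. The one genuine difference is the mechanism of the inductive step. The paper moves the vertex to a point where no eigenfunction of the NK-modified graph vanishes, restores the coefficient $\alpha_{v^*}$, and then invokes the \emph{strict} interlacing inequalities of Theorem~\ref{thm:interlacing_delta} (whose hypotheses --- simplicity plus $f(v)\neq 0$ --- are exactly what the inductive hypothesis and Lemma~\ref{lem:nonzero} provide) to conclude that strict interlacing forces simplicity of the new spectrum. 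You instead argue by contradiction entirely at the level of eigenfunctions: a degenerate eigenvalue of $\Gamma$ yields, via the combination $f_2(v^*)f_1-f_1(v^*)f_2$, a nonzero eigenfunction vanishing at $v^*$, which then satisfies the NK condition there and becomes an eigenfunction of $\Gamma^*$ vanishing at the forbidden point. This is the same trick the paper deploys in the proof of Theorem~\ref{thm:main}, and using it here buys you independence from the interlacing theorem --- a nontrivial external input --- at essentially no cost; your handling of the two-parameter perturbation (perturb $\ell_b,\ell_3,\dots$ to fix $\Gamma^*$, then slide $\ell_a$ at fixed $\ell_a+\ell_b$ to move the split point) is also a clean way to organize the density argument. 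One small point you use implicitly and should state: the alternative ``$f_n\equiv 0$ on the merged edge'' in Lemma~\ref{lem:nonzero} cannot occur, because unique continuation holds on a loop graph (all vertices have degree $2$, so $f=f'=0$ at a vertex propagates around the loop); the paper makes this remark explicitly.
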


\begin{proof}
  We prove the lemma by induction on the number of vertices with
  $\alpha_v\neq 0$.

  Let $\Gamma_1$ be a loop with one vertex $v$ and $\ell$ be the
  length of the loop.  Parametrize the loop with a coordinate $x$ such
  that $x=0$ corresponds to $v$, $x>0$ in the clockwise direction and
  $x<0$ in the anticlockwise direction, with $\ell/2$ ad $-\ell/2$
  corresponding to the same point.  Since $\Gamma_1$ has reflection
  symmetry, every eigenfunction is either odd or even, see
  Fig.~\ref{fig:even_odd_impure}.

  \begin{figure}
    \center
    {
    \includegraphics[width=5cm]{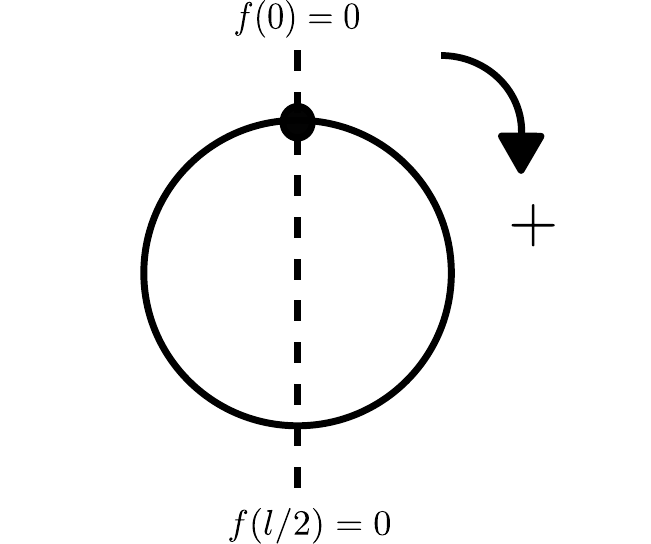}
    \hspace{1cm}
    \includegraphics[width=5cm]{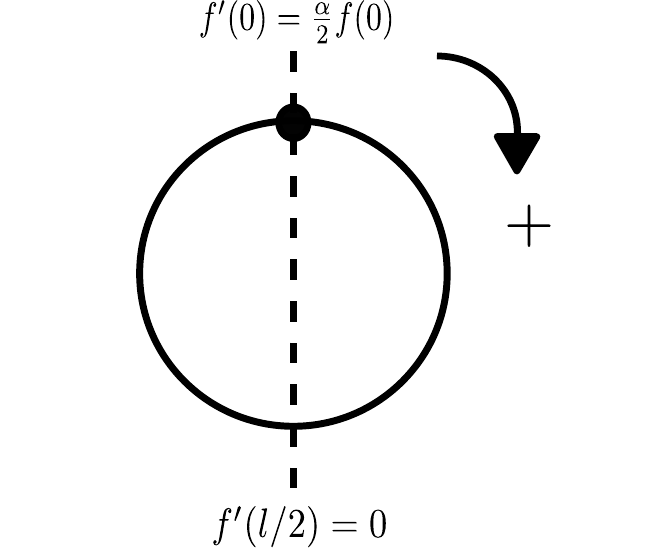}
    }
    \caption{A loop with one vertex and the structure of the odd (left)
      and even (right) eigenfunctions.}
    \label{fig:even_odd_impure}
  \end{figure}

  If $f$ is odd, it satisfies $f(-x)=-f(x)$ for each point $x$.  In
  particular, $f(0)=0$ and, by continuity, $f(\ell/2)=f(-\ell/2)=-f(\ell/2) =
  0$.  Solving the equation $Hf=\lambda f$, we have $f(x)=\sin(\sqrt{\lambda}x)$ and
  \begin{equation}\label{eqn:circle_with_one_vertex_case1}
    \sqrt{\lambda} = 2k\pi /\ell, \qquad k\in \mathbb{N}.
  \end{equation}

  If $f$ is even, i.e. $f(-x)=f(x)$ for every point $x$.
  Since $f'(\ell/2)=-f'(-\ell/2)$ by symmetry and $f'(\ell/2)-f'(-\ell/2)=0$ by
  NK vertex condition at $\ell/2$, we have $f'(\ell/2)=0$.  At $x=0$,
  \begin{equation*}
    \sum_{e\in E_0} \dfrac{df}{dx_e}(0)=\alpha_0 f(0),
  \end{equation*}
  i.e. $2 f_+'(0)=\alpha_0 f(0)$, where $f'_+$ denotes the one-sided
  derivative taken at $0$ in the positive direction.  Solving the
  equation $Hf=\lambda f$, we have
  \begin{equation}\label{eqn:circle_with_one_vertex_case2}
    2\sqrt{\lambda}\sin(\sqrt{\lambda}\ell/2) = \alpha_0\cos(\sqrt{\lambda}\ell/2).
  \end{equation}
  The roots of equation (\ref{eqn:circle_with_one_vertex_case2})
  cannot coincide with (\ref{eqn:circle_with_one_vertex_case1}): the
  substitution of $\sqrt{\lambda} = 2k\pi /l$ into
  (\ref{eqn:circle_with_one_vertex_case2}) results in $0 = \pm
  \alpha_0$, which contradicts our assumptions.  Hence we proved the
  base case for the induction.

  Suppose the statement is true for any impure loop graph with $n$
  Robin-type vertex conditions.  Consider $\Gamma$, an impure loop graph
  with $n+1$ nonzero vertex conditions.  Pick any vertex $v$ and
  change $\alpha_v$ to 0; using inductive hypothesis,
  adjust the edge lengths to obtain a graph $\Gamma'$ with simple
  spectrum.  By Lemma~\ref{lem:nonzero} it is now possible to pick a
  point near the former position of the vertex $v$, where none of the
  eigenfunctions are zero.  Note that the eigenfunctions cannot vanish
  on an open subset of the graph, since the unique continuation holds
  for the impure loop graph.  Now we change the vertex condition at
  the new $v$ back to $\alpha_v$ and use the strict inequalities in
  Theorem~\ref{thm:interlacing_delta} to conclude that the spectrum is
  still simple.

  We have established that there is an arbitrarily small perturbation
  of edge lengths which will make the spectrum simple.  Therefore, the
  set of admissible lengths is dense and, by the argument in
  Remark~\ref{rem:residual}, we deduce that the set of admissible
  lengths is residual.
\end{proof}

\begin{remark}
  The proof of Lemma~\ref{lem:simplicity_of_a_loop} allows for a
  slightly stronger statement: the length modifications actually
  preserve the total length of the graph.
\end{remark}

We are now ready to prove Lemma~\ref{lem:support_is_not_a_circle}.

\begin{proof}[Proof of lemma \ref{lem:support_is_not_a_circle}]
  Split the loop $L$ at the attachment point $v$ from $\Gamma$.
  Assign NK vertex condition to the former attachment point on
  the loop, which we now call $v_1$, see
  Fig.~\ref{fig:proof_impure_loop}, and keep other vertex conditions unchanged.

 \begin{figure}
    \centering
    \includegraphics[scale=0.7]{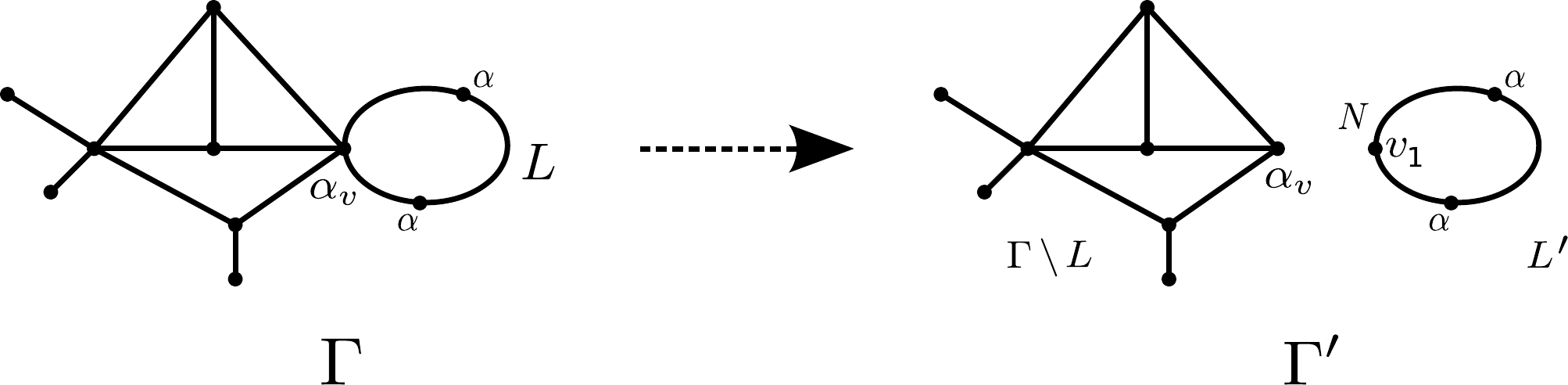}
    \caption{Modifications to graph $\Gamma$ for the proof of
      Lemma~\ref{lem:support_is_not_a_circle}.}
    \label{fig:proof_impure_loop}
  \end{figure}

  Apply Lemma \ref{lem:simplicity_of_a_loop} to $L$ so that
  $\sigma(L')$ is simple for the changed length loop $L'$.
  Furthermore, by Lemma~\ref{lem:nonzero}, we can pick a point $v_1'$
  arbitrarily close to the former attachment point so that each
  eigenfunction $f$ of $L'$ is nonzero at $v_1'$.  Attach $L'$ back to
  $\Gamma\setminus L$ at $v_1'$. Then the new graph $\Gamma'$
  satisfies the same vertex conditions everywhere as $\Gamma$.

  If there exists an eigenfunction $g$ of $\Gamma'$ with $\supp g=L'$,
  then necessarily $g(v_0)=0$ and $g$ is an eigenfunction of the loop
  $L'$ (see equation~(\ref{eq:eig_wrt_loop})), which is a
  contradiction.
\end{proof}

Now, we are ready to prove the main theorem.

\begin{proof}[Proof of theorem \ref{thm:main}]
  We will prove the result by an induction on the number of edges of
  the graph.  We will also employ a sub-induction on the number of
  vertices with Robin-type conditions.

  If $\Gamma$ consists of one edge which is not a loop, the statement
  holds by the classical Sturm-Liouville theory.  The case of a loop
  with no non-trivial $\delta$-type vertices is specifically excluded
  by the assumptions of the Theorem.  The case of a loop with one
  vertex $v$ with a non-zero $\delta$-type condition is covered by
  Lemma~\ref{lem:simplicity_of_a_loop} (part~(\ref{subitem:supp_loop})
  of the Theorem is true automatically).  A loop with more than one
  non-zero condition is already a graph with at least two edges.

  The plan for the inductive step is as follows.  First we establish
  part~(\ref{item:main_simple}) for a graph $\Gamma$ if \emph{both
    parts} of the Theorem hold for every graph with a smaller number
  of edges or with the same number of edges and a smaller number of
  vertices with Robin-type conditions.  Then we will establish
  part~(\ref{item:eigenfunctions}) assuming, in addition to the above,
  that the spectrum of $\Gamma$ is simple.

  Consider $\Gamma$, a connected graph with $n$ edges, satisfying
  $\delta$-type conditions with coefficients $\alpha_v$ for each
  vertex $v$.  For the proof of part~(\ref{item:main_simple}) we
  consider three cases.

  \emph{Part~(\ref{item:main_simple}), case 1}.  $\Gamma$ has no loops
  or cycles, i.e. it is a tree.

  \begin{figure}
    \center
    \includegraphics[scale=0.7]{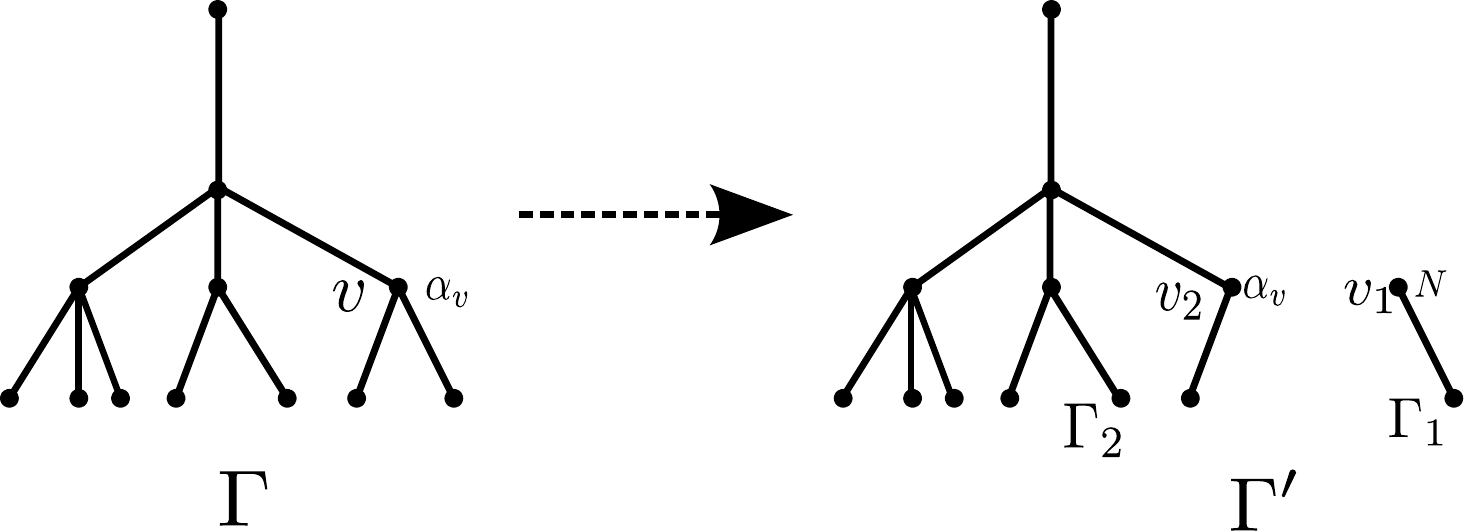}
    \caption{Part (i), case 1: splitting away an edge from a tree.}
    \label{fig:tree_case}
  \end{figure}

  Choose an edge $e$ leading to a leaf (a vertex of degree 1) of the
  tree and split it from the tree.  The new one-edge graph we denote
  by $\Gamma_1$ while the rest of the tree is denoted by $\Gamma_2$.
  The attachment point $v$ of that edge is split into two vertices,
  $v_1\in\Gamma_1$ and $v_2\in\Gamma_2$, see Fig.~\ref{fig:tree_case}.
  We assign NK condition to vertex $v_1$.  The vertex $v_2$
  inherits the $\delta$-type condition with the constant $\alpha_v$
  (which may also be 0), while all other vertices keep their previous
  conditions.

  Adjust the edge lengths of
  $\Gamma_1$ and $\Gamma_2$ so that
  \begin{enumerate}
  \item the graph $\Gamma_2$ satisfies (i) and (ii), and
  \item $\sigma(\Gamma_1)\cap\sigma(\Gamma_2)=\{0\}$.  
  \end{enumerate}

  Note that $\sigma(\Gamma_1)$ is a set of strictly decreasing
  functions of the edge length.  Once condition (1) has been satisfied
  (which is possible on a residual set by the inductive hypothesis)
  and the discrete set $\sigma(\Gamma_2)$ has been fixed, the
  intersection $\sigma(\Gamma_1)\cap\sigma(\Gamma_2)$ is non-empty on
  a countable set of lengths of $\Gamma_1$, and its complement is
  dense.  Altogether, conditions (1) and (2) above are satisfied on a
  dense set of edge lengths of the graph $\Gamma$.  We remark that the
  graph $\Gamma_2$ satisfies properties (i) and (ii) of the Theorem
  automatically.

  Glue $v_1$ and $v_2$ back together and call the resulting vertex
  $\tilde{v}$.  The new graph $\tilde{\Gamma}$ has the same vertex
  conditions as $\Gamma$.  We claim that the spectrum of $\Gamma$ is
  simple.  Assume the contrary, $\lambda\in\sigma(\tilde{\Gamma})$ is
  multiple with corresponding eigenfunctions $f_i$.  Then we can find
  a non-zero linear combination $f=\sum a_i f_i$, which is still an
  eigenfunction of $\tilde{\Gamma}$, such that it is also an
  eigenfunction \emph{with respect to the graph $\Gamma_1$}, i.e.\
  $f'(\tilde{v})=0$ along the edge $e$.  Since
  \begin{equation*}
    \alpha_v f(\tilde{v}) = \sum_{e'\in E_{\tilde{v}}} \dfrac{df}{dx_{e'}}(\tilde{v}) 
    = \sum_{e'\in E_{\tilde{v}} \setminus \{e\}} \dfrac{df}{dx_{e'}}(\tilde{v}),    
  \end{equation*}
  the function $f$ is also an eigenfunction with respect to the graph
  $\Gamma_2$. 

  If $f$ is non-zero on both $\Gamma_1$ and $\Gamma_2$, condition (2)
  is violated. If $f$ is zero on one of them, it is zero on
  $\tilde{v}$ which violates condition (1). Therefore, the spectrum of
  $\tilde{\Gamma}$ is simple.

  \emph{Part (i), case $2$.}  $\Gamma$ contains at least one loop.

  \begin{figure}
    \center
    \includegraphics[scale=0.7]{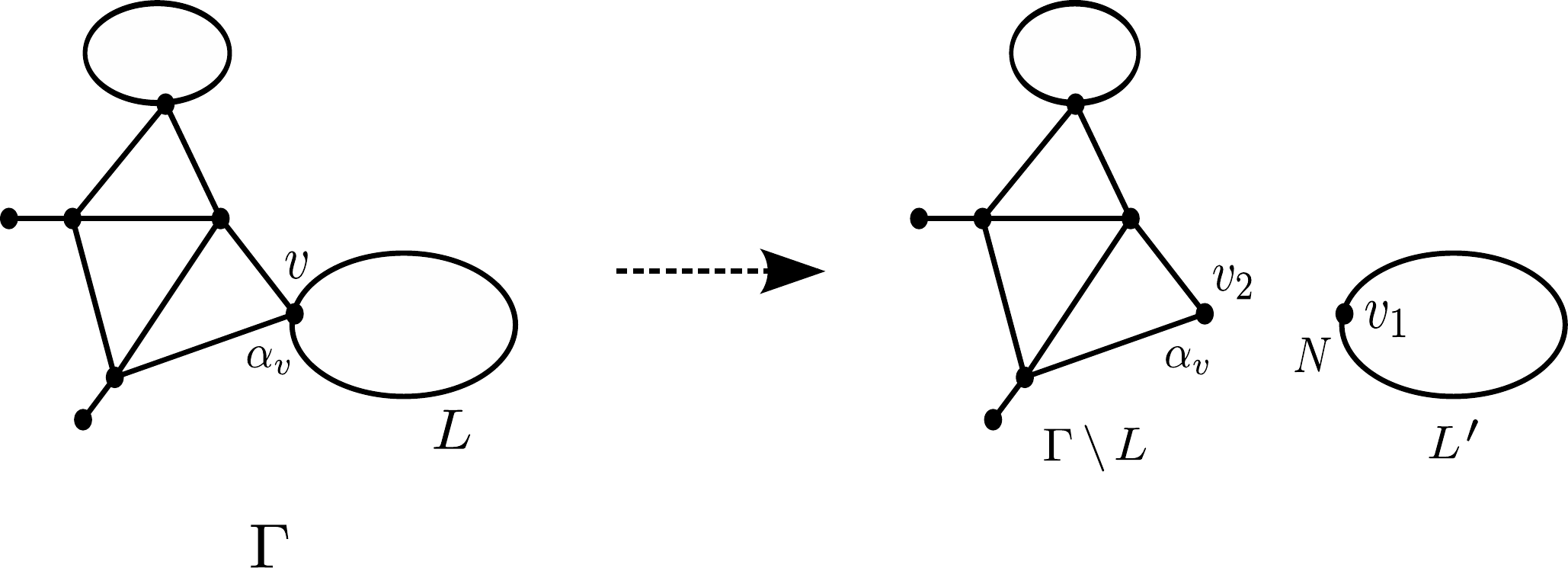}
    \caption{Part (i), case 2: splitting away a loop.}
    \label{fig:loop_case}
  \end{figure}

  For each loop $L$ of $\Gamma$, we repeat the steps of the previous case, namely,
  \begin{enumerate} 
  \item split $L$ away from the rest of $\Gamma$ at the attachment
    point $v$, see Fig.~\ref{fig:loop_case}. For $\Gamma\setminus L$, keep
    $\alpha_v$ vertex condition for the attachment point $v_2$ and
    keep all other vertex conditions unchanged.  Adjust the lengths of
    $\Gamma\setminus L$ so that
    \begin{equation*}
      \Gamma\setminus L \mbox{ satisfies (i) and (ii)},
    \end{equation*}
  \item \label{item:disjoint_spec} assign NK vertex condition to
    the attachment point $v_1$ on $L$; adjust $L$ so that
    \begin{equation*}
      \sigma(L)\cap\sigma(\Gamma\setminus L)=\{0\},
    \end{equation*}
    and glue $L$ back to $\Gamma$.
  \end{enumerate}
  For each loop the above condition (1) is satisfied on a residual set
  of lengths of the graph $\Gamma\setminus L$ and for each $\Gamma
  \setminus L$, condition (2) is satisfied on a dense set of lengths
  of $L$.  Altogether they are satisfied on a dense set of lengths of
  $\Gamma$, but repeating the argument of Remark~\ref{rem:residual} we
  conclude that the set is actually residual.

  For all loops $L$ \emph{simultaneously}, the conditions are
  satisfied on an intersection of residual sets which is also
  residual.

  Let us first consider $\lambda$ which is an eigenvalue of some loop
  $L$.  By Example~\ref{ex:graph_with_loop}, $\lambda = (2\pi n/\ell)^2 \in
  \sigma(\Gamma)$, where $\ell$ is the length of the loop graph $L$.
  As explained in Example~\ref{ex:graph_with_loop}, the eigenfunction
  supported on the loop is unique.

  We want to show that the above $\lambda$ is simple in the spectrum
  $\sigma(\Gamma)$. Assume the contrary, there is at least one
  eigenfunction $f$ which is not identically zero on $\Gamma\setminus
  L$.  Transform $f$ by flipping the loop; this is still an
  eigenfunction which we will denote by $\tilde{f}$.  The function $g
  = (f+\tilde{f})/2$ has the following properties: it is an
  eigenfunction of $\Gamma$, not identically zero on $\Gamma\setminus
  L$, and it is even with respect to flipping the loop.  The latter
  implies that its derivative at the midpoint of the loop is zero
  \begin{equation*}
    g_L'(\ell/2) = 0.
  \end{equation*}
  Parametrizing the loop as $[0,\ell]$, we know that $g$ on the loop
  takes the form $g_L = A\cos(2\pi n (x-\ell/2) / \ell)$ which by
  direct computation implies that
  \begin{equation*}
    g_L'(0) = g_L'(\ell) = 0.
  \end{equation*}
  Therefore, the function $g$ satisfies the $\delta$-type conditions
  at the attachment point $v$ also with respect to the graph
  $\Gamma\setminus L$.  Thus we get $\lambda \in \sigma(\Gamma
  \setminus L)$ in contradiction with condition
  (\ref{item:disjoint_spec}) above.

  Now, if $\lambda$ is not an eigenvalue of any loop of $\Gamma$, and
  is multiple, we can find $f=\sum_{i=1}^2 a_i f_i$ such that $f$
  satisfies NK vertex condition with respect to some loop $L$.
  It must be identically zero on $L$ (otherwise $\lambda$ is an
  eigenvalue of $L$), therefore at the attachment point $f(v)=0$.
  Since $f$ is an eigenfunction on the graph $\Gamma\setminus L$,
  which satisfies (ii), it must be supported on some other loop $L'$,
  resulting in a contradiction.

  We remark that there is one case for which the graph
  $\Gamma\setminus L$ is not covered by the inductive hypothesis
  because it is a circle: the ``figure of 8'' graph considered
  explicitly in Example~\ref{ex:figure8}.

  \emph{Part (i), case $3$.} $\Gamma$ contains at least one cycle and no loops.
  
  Now we also employ the induction on the number of Robin-type
  (i.e. $\alpha_v \neq 0, \infty$) vertices.  First consider the base
  case of no Robin-type vertices.   For this case we can assume that
  $\lambda \neq 0$, as the case $\lambda=0$ is already covered by
  Remark~\ref{rem:lambda0}. 

  \begin{figure}
    \center
    \includegraphics[scale=0.7]{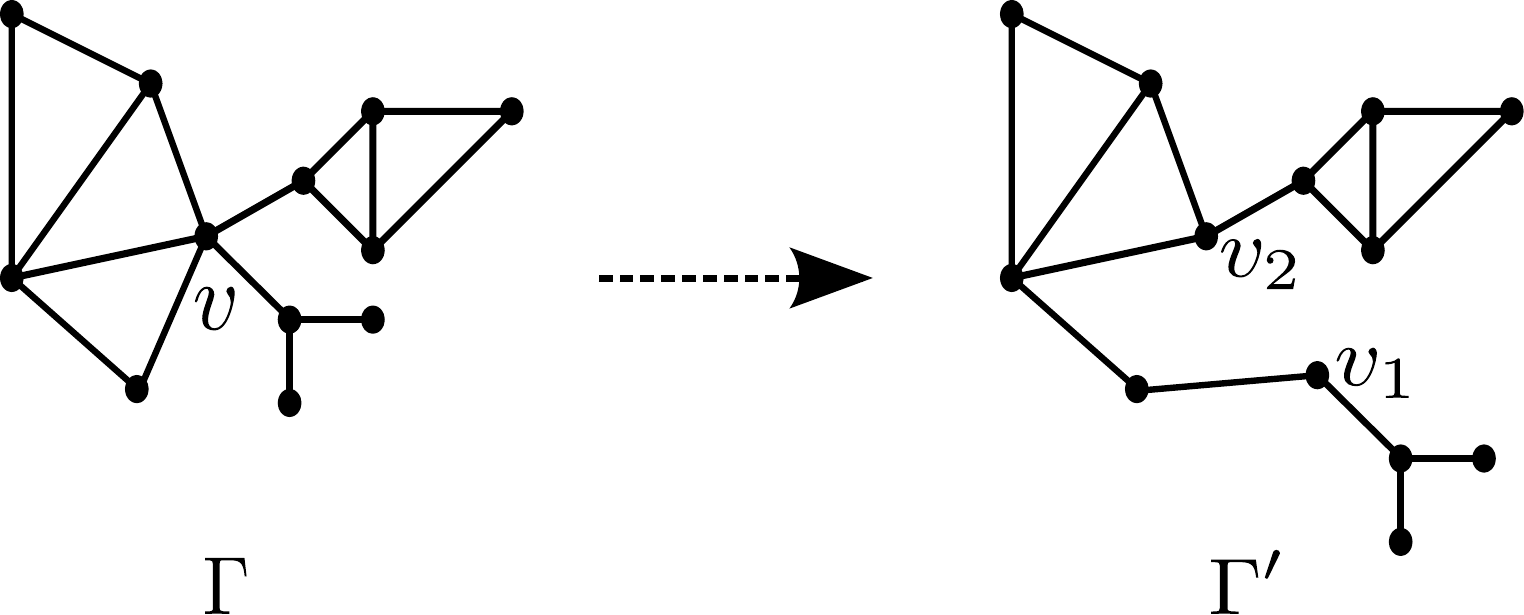}
    \caption{Part (i), case $3$: Splitting a graph with a cycle; each of
      the new vertices is not on a loop.}
    \label{fig:cyclecase}
  \end{figure}
  
  Pick a vertex $v$ on the cycle such that $\deg(v)\geq 3$.  Split
  $\Gamma$ at $v$ so that $\deg(v_1)=2$, $\deg(v_2)\geq 1$, and the
  graph is still connected (this is possible precisely because $v$ is
  on a cycle).  For the new graph $\Gamma'$, assign NK vertex
  condition to $v_1$ and $v_2$, and keep other vertex conditions
  unchanged, see Fig.~\ref{fig:cyclecase}. The vertex $v_1$ is now
  trivial, hence $\Gamma'$ has effectively one edge less than
  $\Gamma$.  We can thus use induction and adjust $\Gamma'$ to satisfy
  conditions (i) and (ii).
  
  \begin{figure}
    \center
    \includegraphics[scale=0.7]{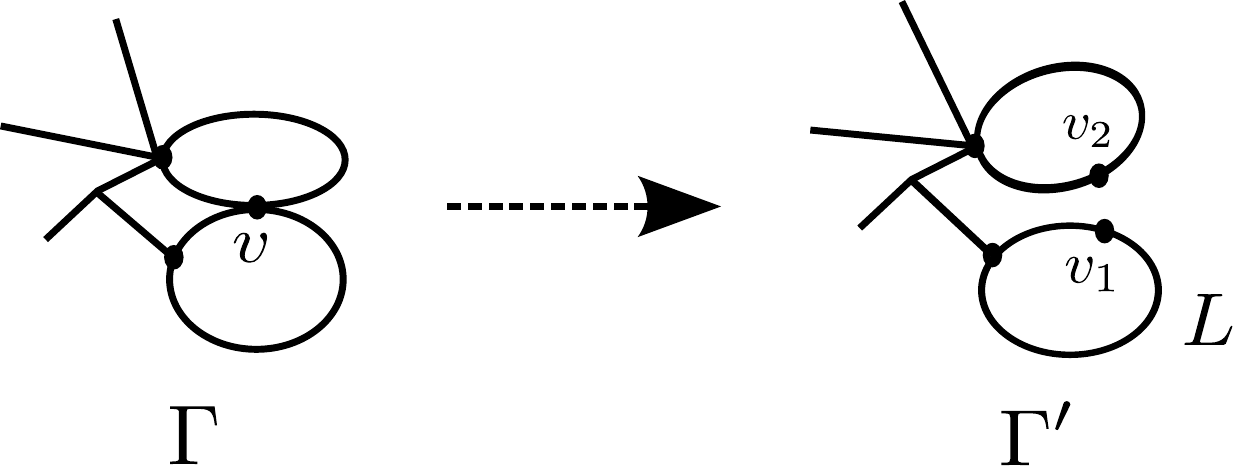}
    \caption{Part (i), case $3$: Splitting a graph with a cycle; both
      new vertices are on loops.}
    \label{fig:cyclecase_twoloops}
  \end{figure}
  
  First, we show that we can always find a new vertex $v_1'$ near
  $v_1$ so that $f(v_1')\neq f(v_2)$ for each non-constant
  eigenfunction $f$ of $\Gamma'$.  Because our modification of
  $\Gamma$ may have created loops, we need to consider three
  possibilities:
  \begin{enumerate}
  \item neither $v_1$ nor $v_2$ is on a loop. Then $\Gamma'$ has no
    loops. Since $\Gamma'$ satisfies (i) and (ii), $f(v_2)\neq 0$.  By
    Lemma~\ref{lem:nonzero}, there exists $v_1'$ so that $f(v_1')\neq
    f(v_2)$.
  \item only one of $v_1$ and $v_2$ is on a loop. After relabeling, we
    may assume that $v_1$ is on a loop we denote $L$ and $v_2$ is not on
    a loop.  Since $\Gamma'$ satisfies (i) and (ii), for each
    eigenfunction, is not identically 0 in the neighborhood of $v_1$ and
    we can again apply Lemma~\ref{lem:nonzero}, whether $f(v_2)$ is zero
    or not.
  \item both of $v_1$ and $v_2$ are on loops, see
    Fig.~\ref{fig:cyclecase_twoloops}.  Then both $v_1$ and $v_2$ may
    be adjusted.  Since the eigenfunction cannot vanish identically
    around both $v_1$ and $v_2$ at the same time, we can again use
    Lemma~\ref{lem:nonzero} to make adjustments until $f(v_1') \neq
    f(v_2')$.
  \end{enumerate} 

  Now glue $v_1'$ and $v_2$ (or $v_2'$, if appropriate) together and
  call the vertex $\tilde{v}$.  Note that the new graph $\tilde{\Gamma}$
  has the same vertex conditions as $\Gamma$.  Assume that
  $\lambda\in\sigma(\tilde{\Gamma})$ is multiple with eigenfunctions
  $f_i$.  Similarly to before, we can find $f=\sum a_i f_i$, still an
  eigenfunction of $\tilde{\Gamma}$, such that it satisfies
  Neumann--Kirchhoff condition with respect to edges that were connected to $v_1'$,
  \begin{equation*}
    \sum_{e\in E_{v_1'}}\dfrac{df}{dx_e}(v_1')=0.
  \end{equation*}
  Then $f$ also satisfies the NK vertex condition with respect to the
  edges connected to $v_2$.  Therefore, $f$ is an eigenfunction of
  $\Gamma'$, which means $f(v_1')\neq f(v_2)$, contradicting the fact
  that $f$ is continuous at $\tilde{v}$ for $\tilde{\Gamma}$. Hence
  $\lambda$ is simple.

  Finally we consider the case of graphs without cycles but with some
  Robin-type vertices.  Choosing an arbitrary vertex $v$ with $\alpha_v
  \neq 0$, we set $\alpha_v = 0$ and obtain a graph $\Gamma'$ with
  less Robin-type vertices for which, as we assumed, the Theorem
  holds.  We adjust the edge lengths to obtain a graph
  $\tilde{\Gamma}'$ both properties (i) and
  (ii), in particular we achieve that $f(v) \neq 0$ for every
  eigenfunction of $f$ of the graph (using, if necessary,
  Lemma~\ref{lem:nonzero}). Changing the parameter $\alpha_v$ back to
  its original non-zero value we use
  Theorem~\ref{thm:interlacing_delta} in its strict form
  (equation~\eqref{eq:interlacing_strict}) to conclude that the
  spectrum of the graph $\tilde{\Gamma}$ is simple.

  \emph{Proof of part (ii)}. We will show the statement on a single
  vertex basis, that is we fix a vertex $v$ and show that each
  eigenfunction is either non-zero at $v$ or is supported on a loop.
  This is achieved by small modifications of the graph and holds on a
  residual set of lengths.  As a result it will hold at every $v$ on an
  intersection of residual sets, which is also residual.
  
  First, we assume that the vertex $v$ has $\alpha_v=0$.  We may
  assume that the spectrum of $\Gamma$ is simple.  Also, after a series
  of modifications we may assume that for each edge $e$ of the graph,
  $\Gamma\setminus e$ satisfies (i) and (ii).  If removing $e$
  disconnects the graph, we assume that each of the two components
  satisfies the assumptions.  In the special case when removing an edge
  creates a new loop, we also ask that the loop states do not vanish at the
  point where the edge was attached (again achievable by a small
  movement of the attachment point).  Each of the conditions can be
  fulfilled on a residual set.

  Now let $(\lambda,f)$ be the $n$-th eigenpair of $\Gamma$ such that
  $f$ is the first eigenfunction that vanishes at the chosen vertex
  $v$, $f(v)=0$.

  \emph{Case 1:} $f\vert_e \equiv 0$ for some edge $e$ incident to $v$.

  Now $f(u)=f(v)=0$ for the end points $u, v$ of the edge $e$ and $f$ is
  also an eigenfunction of $\Gamma\setminus e$.  Since $\Gamma\setminus
  e$ satisfies (i) and (ii), $\supp f=L$ for a loop $L$ in
  $\Gamma\setminus e$.  If this loop is present in $\Gamma$, we have
  nothing further to prove.  If the loop is not present in $\Gamma$,
  then either $u$ or $v$ lies on the loop and we get a contradiction
  with the conditions imposed on $\Gamma\setminus e$ above.
  
  \emph{Case 2:} $f\vert_e \not\equiv 0$ for each edge $e$ incident to
  $v$.
  
  \begin{figure}
    \includegraphics[scale=1]{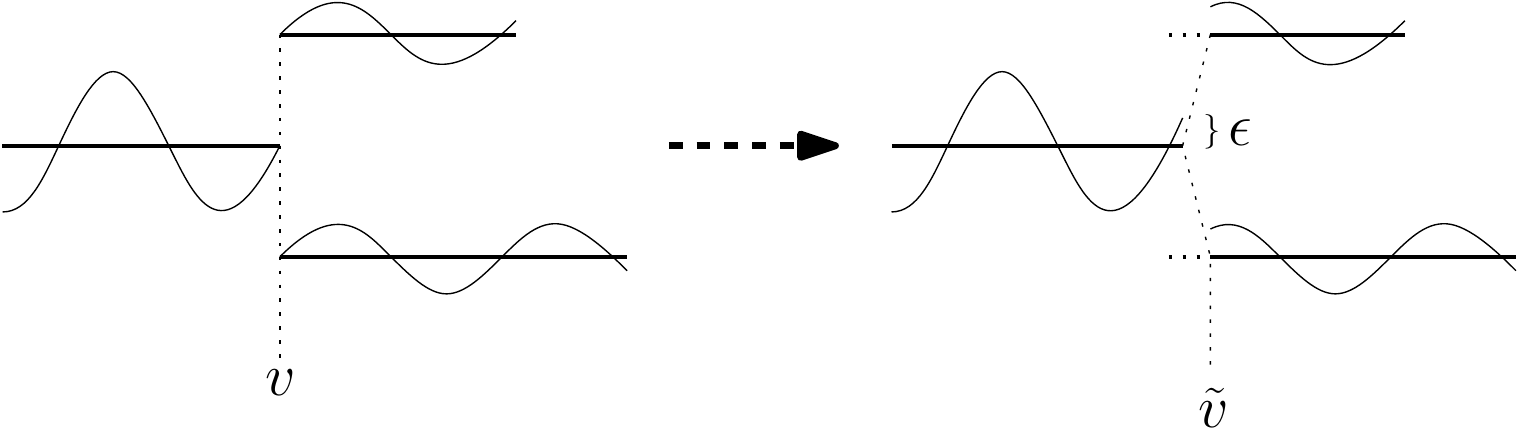}
    \caption{Modifying the edge lengths so that $f_e(v)$ becomes equal to
      $\epsilon$.}
    \label{fig:from0}
  \end{figure}
  
  Parametrizing the edges incident to $v$ so that $x=0$ at $v$, we have
  \begin{equation}
    -\dfrac{d^2}{dx^2} f = \lambda f, \qquad
    \begin{cases}
      \sum_{e\in E_v}\dfrac{df}{dx_e}(0)=0,\\
      f(0) = 0,
    \end{cases}
  \end{equation}
  and, denoting $k = \sqrt{|\lambda|}$, we get
  \begin{equation*}
    f_e(x)=
    \begin{cases}
      A_e\sin(kx), & \lambda >0, \\
      A_ex, & \lambda = 0,\\
      A_e\sinh(kx), & \lambda <0
    \end{cases}
    \qquad\mbox{ and }\qquad 
    \sum_{e\in E_v} A_e=0.    
  \end{equation*}
  Note that each coefficient $A_e\neq0$ since $f\vert_e \not\equiv 0$.
  At $v$, we will shorten the edges with $f_e' = A_e > 0$ and lengthen
  edges with $f_e'<0$ in a way which will be controlled by a (small)
  parameter $\epsilon$.  Namely, we ask that $f_e(\tilde v)=\epsilon$,
  where $\tilde v$ denotes the new position of the vertex $v$ and the
  function $f_e$ is kept as before, see Fig.~\ref{fig:from0}.
  
  In this way, $f$ is still continuous at $\tilde v$ and satisfies
  $\delta$-type vertex condition for some parameter $\alpha'$, which we
  compute as follows.  The new position of the vertex $\tilde v$ on edge $e$
  is determined by the equation $f_{e}(x_e)=\epsilon$, or
  \begin{equation*}
    x_e=
    \begin{cases}
      1/k\arcsin(\epsilon/A_e), & \lambda > 0, \\
      \epsilon / A_e, & \lambda = 0, \\
      1/k\arcsinh(\epsilon/A_e), & \lambda < 0.
    \end{cases}
  \end{equation*}
  We now find the coefficient $\alpha_v'$ from the condition 
  \begin{equation*}
    \sum_{e\in E_v} \dfrac{df}{dx_e}(x_e) = \alpha'_v f(x_e)=\alpha'_v\epsilon,  
  \end{equation*}
  leading to, by Taylor expansion in each of the three cases,
  \begin{equation*}
    \alpha'_v = \frac{1}{\epsilon}\sum_{e\in E_v} f_e'(x_e)
    = O(\epsilon).
  \end{equation*}
  We now consider two families of graphs, continuously depending on the
  parameter $\epsilon$: $\Gamma'$ which has the modified edge lengths and the
  NK condition at the vertex $v$ and $\Gamma''$ which in addition
  to changed lengths has $\alpha_v' = \alpha_v'(\epsilon)$ condition
  computed above.
  
  If the parameter $\epsilon$ is small enough, the eigenfunctions
  below $n$-th are still non-zero at $v$.  Furthermore, the eigenvalues of three
  graphs $\Gamma$, $\Gamma'(\epsilon)$ and $\Gamma''(\epsilon)$ are
  still in correspondence.  More precisely, for any $k$ and small enough
  $\epsilon$, the eigenvalue $\lambda_k(\Gamma_1)$ is closer to
  $\lambda_k(\Gamma_2)$ than to any other eigenvalue of $\Gamma_2$ for
  any two of the above three graphs.
  
  We now claim that the $n$-th eigenfunction of $\Gamma'$ doesn't vanish
  on $v$ for any $\epsilon>0$, provided it is small enough.  Indeed, if
  $f(v)=0$, then it is also an eigenfunction of $\Gamma''$ (condition
  (\ref{eq:cond_deltatype}) will be satisfied for any $\alpha_v$) and
  must have index $n$ due to the correspondence of eigenvalues, but we
  explicitly constructed the $n$-th eigenfunction of $\Gamma''$ above to
  have value $\epsilon$ on the vertex $v$.  This completes the proof of
  case 2.
  
  Finally, we have to consider the vertex $v$ with a Robin-type
  condition: $\alpha_v\neq 0$.  In this case we make edge length
  adjustments to the graph $\Gamma_0$ obtained from $\Gamma$ by setting
  $\alpha_v=0$.  Once condition (ii) is satisfied for $\Gamma_0$ at $v$,
  it is also satisfied for $\Gamma$ at $v$.  Indeed, if an eigenfunction
  of $\Gamma$ vanishes at $v$, it automatically becomes the
  eigenfunction of $\Gamma_0$, still vanishing at $v$, which is a
  contradiction.
\end{proof}

\section{An application: connectedness of the secular manifolds}
\label{sec:secular}

In this section we will deal only with graphs with NK or
Dirichlet vertex conditions.  For such a graph $\Gamma$ it is possible
to find the eigenvalues $\lambda = k^2$ as the solutions of the
equation
\begin{equation}
  \label{eq:secdet}
  F_\Gamma(k) := C \det\left(e^{-ikL/2} I - e^{ikL/2} S\right) = 0,
\end{equation}
where all matrices have dimension $2E$ with $E$ being the number of
edges of the graph; they should be thought as operating on vectors
indexed by the directed edges of the graph (each edge corresponds to
two directed edges).  The matrix $I$ is the identity matrix, $L$ is
the diagonal matrix populated with the edge lengths and $S$ is a
unitary matrix with real entries of known form \cite{KotSmi_ap99} (the
precise form is irrelevant to our discussion).  The constant $C$ can
be chosen so that $F_\Gamma$ is real for real $k$.

Each length appears in the matrix $L$ twice: once for each direction
of the edge.  As a consequence, the diagonal matrix $e^{ikL}$ has two
entries $e^{ik \ell_e}$ for each edge $e$.  Substituting $k \ell_e$ with the
torus variables $\kappa_e \in [0,2\pi)$, we get the function
$\Phi_\Gamma(\kappa_1, \ldots, \kappa_E)$ such that
\begin{equation*}
  \Phi_\Gamma(k \ell_1, \ldots, k \ell_E) = F_\Gamma(k).
\end{equation*}
The solutions $\vec{\kappa}$ of $\Phi_\Gamma(\vec{\kappa}) = 0$ form
an algebraic subvariety $\Sigma_\Gamma$ of the torus $\mathbb{T}^E$.
We call $\Sigma_\Gamma$ the \emph{secular manifold} of the graph
$\Gamma$.  The study of $\Sigma_\Gamma$ as a tool of understanding
eigenvalues of a quantum graphs was pioneered by Barra and Gaspard
\cite{BarGas_jsp00}. 

It has been conjectured by Colin de Verdi\`ere \cite{CdV_ahp15} that
the secular manifold is irreducible if the graph has a symmetry which
is preserved under any change of edge lengths.  It can be shown (see
\cite{CdV_ahp15} for a partial proof) that in this case, the graph is
an interval, a circle, a mandarin \cite{BanBerWey_jmp15} (also called
``pumpkin graph'' by other authors \cite{Ken+_ahp16}) or has some
loops.  It is also conjectured that in this case the set of the
non-smooth points has co-dimension 2 with respect to the manifold
$\Sigma_\Gamma$.

In this section we prove a related result for a family of quantum
graphs.  We start with some terminology from \cite{CdV_ahp15} (whose
term for $\Sigma_\Gamma$ is ``determinant manifold'').  A point of
$\Sigma_\Gamma$ is \emph{smooth} if the differential of $\Phi_\Gamma$
at this point is non-zero.  A point $\vec{\kappa} \neq 0$ is smooth if
and only if $1$ is a non-degenerate eigenvalue of the graph $\Gamma$
with edge lengths set to $\ell_e = \kappa_e$ or, more generally, if
$\lambda=k^2$ is an eigenvalue of $\Gamma$ with lengths $\ell_e$ such
that $\vec{\kappa} = k \ell_e \mod 2\pi$.  In what follows we will
omit the ``modulo $2\pi$'' from the description of points on the torus,
to keep the notation compact.

\begin{thm}
  \label{thm:connected}
  Let the graph $\Gamma$ have no loops and have a vertex of degree
  one.  Then the set of smooth points of $\Sigma_\Gamma$ has two
  connected components.
\end{thm}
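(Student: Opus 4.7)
My strategy is to define a locally constant $\{\pm 1\}$-valued invariant $\sigma$ on the smooth locus $\Sigma_\Gamma^{\rm sm}$, show it attains both values, and show each of its level sets is path-connected; this gives exactly two smooth components.

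Given $\vec\kappa\in\Sigma_\Gamma^{\rm sm}$, choose $\vec\ell\in\mathbb{R}_+^E$ and $k>0$ with $\vec\kappa\equiv k\vec\ell\pmod{2\pi}$ and $k^2$ the $n$-th positive eigenvalue of $\Gamma$ with lengths $\vec\ell$, and set $\sigma(\vec\kappa):=(-1)^n$. To show $\sigma$ is well-defined, note that any two such realizations $(\vec\ell,k,n)$ and $(\vec\ell',k',n')$ can be joined by a continuous path within the fibre $\{(\vec\ell,k):k\vec\ell\equiv\vec\kappa\pmod{2\pi}\}$; along the path the integer-valued index $n(t)$ may only jump when two eigenvalues collide at the tracked value $\vec\kappa$, which would contradict the smoothness of $\vec\kappa$. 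Hence $n(t)$ is constant along the path, in particular its parity. The same continuity argument applied in a small neighbourhood shows $\sigma$ is locally constant. Both signs are attained: for any $\vec\ell$ in the residual set provided by Theorem~\ref{thm:main}, the consecutive points $k_n(\vec\ell)\vec\ell$ and $k_{n+1}(\vec\ell)\vec\ell$ lie in $\Sigma_\Gamma^{\rm sm}$ and carry opposite values of $\sigma$. Thus $\Sigma_\Gamma^{\rm sm}$ has at least two components.

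For the reverse bound, fix $\vec\kappa^{(1)},\vec\kappa^{(2)}\in\Sigma_\Gamma^{\rm sm}$ with $\sigma(\vec\kappa^{(1)})=\sigma(\vec\kappa^{(2)})$ and realizations $\vec\kappa^{(i)}=k_{n_i}(\vec\ell^{(i)})\vec\ell^{(i)}$ with $n_1\equiv n_2\pmod 2$. For each fixed $n$, the map $\vec\ell\mapsto k_n(\vec\ell)\vec\ell\pmod{2\pi}$ is continuous on the set of $\vec\ell$ for which $\lambda_n$ is simple---a residual, hence path-connected, subset of $\mathbb{R}_+^E$ by Theorem~\ref{thm:main}---so its image $C_n$ is a path-connected subset of $\Sigma_\Gamma^{\rm sm}$. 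To join $C_{n_1}$ to $C_{n_2}$ of the same parity, the leaf-vertex hypothesis is used decisively: continuously stretching the leaf edge $\ell_{e_0}$ through a long interval (while adjusting the other lengths on a residual set to stay generic by Theorem~\ref{thm:main}) produces a family of eigenvalue curves along which the tracked index shifts by two each time a pair of ``leaf-wave'' eigenvalues (those concentrated on $e_0$) sweeps past the tracked one, in the spirit of the Robin-to-Dirichlet interlacing of Theorem~\ref{thm:interlacing_delta} applied at the leaf. Chaining such length-deformation paths within $\Sigma_\Gamma^{\rm sm}$ produces the required path from $\vec\kappa^{(1)}$ to $\vec\kappa^{(2)}$.

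The main obstacle is the index-shift step: one must control the path of lengths so that the tracked torus point $k_{n}(\vec\ell(t))\vec\ell(t)$ avoids eigenvalue collisions at every $t$, which requires repeated application of Theorem~\ref{thm:main} on the fly to perturb away from incipient degeneracies. The hypothesis on the leaf vertex enters twice: (i) to enable the shifting argument, since the leaf furnishes a supply of controllable eigenvalues whose positions depend monotonically on $\ell_{e_0}$; and (ii) combined with the absence of loops, to invoke condition (\ref{item:eigenfunctions}) of the main theorem without the loop-exception, so that the smooth locus of $\Sigma_\Gamma$ is uniformly parametrised by simple-spectrum graphs with eigenfunctions non-vanishing on vertices, avoiding the extra components seen in looped examples such as Example~\ref{ex:figure8}.
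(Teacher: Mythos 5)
Your overall architecture --- a parity invariant separating two classes of smooth points, plus a leaf-edge deformation to connect points within a class --- is in the right spirit, but the load-bearing steps have genuine gaps. First, the well-definedness of $\sigma$: the fibre $\{(\vec\ell,k): k\vec\ell\equiv\vec\kappa \pmod{2\pi}\}$ is \emph{not} path-connected. Writing $k\ell_e=\kappa_e+2\pi m_e$, it is a disjoint union over $\vec m\in\Z^E$ of one-parameter curves, so your continuity argument only shows constancy of the parity of $n$ along each curve, not across different $\vec m$. The paper avoids this by fixing a single rationally independent $\vec\ell_0$, defining the two candidate components as closures of the odd- and even-numbered intersections of the ergodic flow $k\mapsto k\vec\ell_0$ with $\Sigma_\Gamma$, and separating them by an \emph{intrinsic} invariant: the sign of $\nabla\Phi_\Gamma$, which has all components of one sign and alternates along the flow because the $k_n$ are simple roots of $F_\Gamma$. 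Second, ``residual, hence path-connected'' is false: residual sets need not be connected (the irrationals in $\R$ are residual and totally disconnected; an open dense set such as the complement of a hyperplane in $\R_+^E$ can be disconnected). So the path-connectedness of your sets $C_n$ is unsupported --- and it is essentially the connectivity statement the theorem is trying to prove, so it cannot be assumed.

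Third, and most seriously, the index-shift step is where the real content lies, and the proposal only gestures at it. ``Stretching the leaf edge while adjusting the other lengths on a residual set to stay generic'' does not produce a continuous path on the smooth locus: Theorem~\ref{thm:main} supplies a residual set of good length vectors but gives no way to move continuously between them while the tracked torus point remains on $\Sigma_\Gamma$ and avoids singular points; ``perturbing on the fly'' is not an argument. The paper's resolution is an explicit construction: deform the vertex condition at the leaf through the one-parameter family \eqref{eq:extended_delta}; use analyticity of $\lambda(\theta)$ (valid because, by condition (\ref{item:simple_nonzero}), no eigenfunction vanishes identically on the leaf edge) to pass through every eigenvalue of $\Gamma$ as $\theta$ decreases through multiples of $2\pi$; and convert this into a path on $\Sigma_\Gamma$ by lengthening the leaf edge according to the exact formula \eqref{eq:loose_edge_length}, so that $\sqrt{\lambda(\theta)}\,\vec\ell(\theta)$ stays on the manifold and passes only through multiplicity-one (hence smooth) points. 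The index shifts by exactly $2$ per $2\pi$-period of $\theta$, which is why an even index difference can be bridged; this is the precise mechanism your ``leaf-wave eigenvalues sweeping past'' is reaching for but does not supply.
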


\begin{figure}
  \centering
  \includegraphics{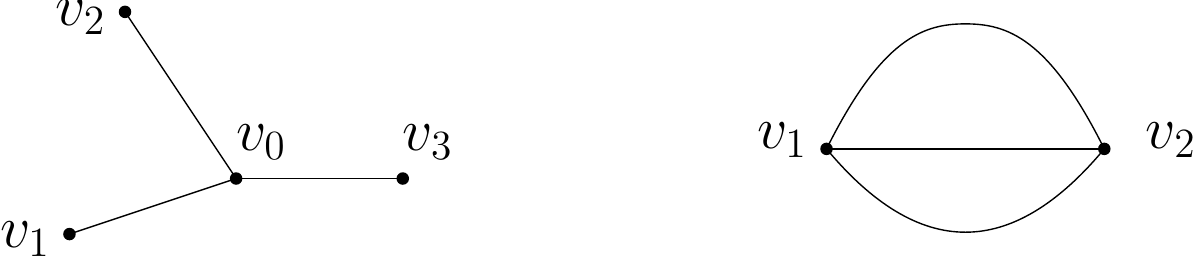}
  \caption{A star graph with three edges and a mandarin graph with three edges.}
  \label{fig:star_and_mandarin}
\end{figure}

\begin{figure}
  \centering
  {
    \includegraphics[scale=0.5]{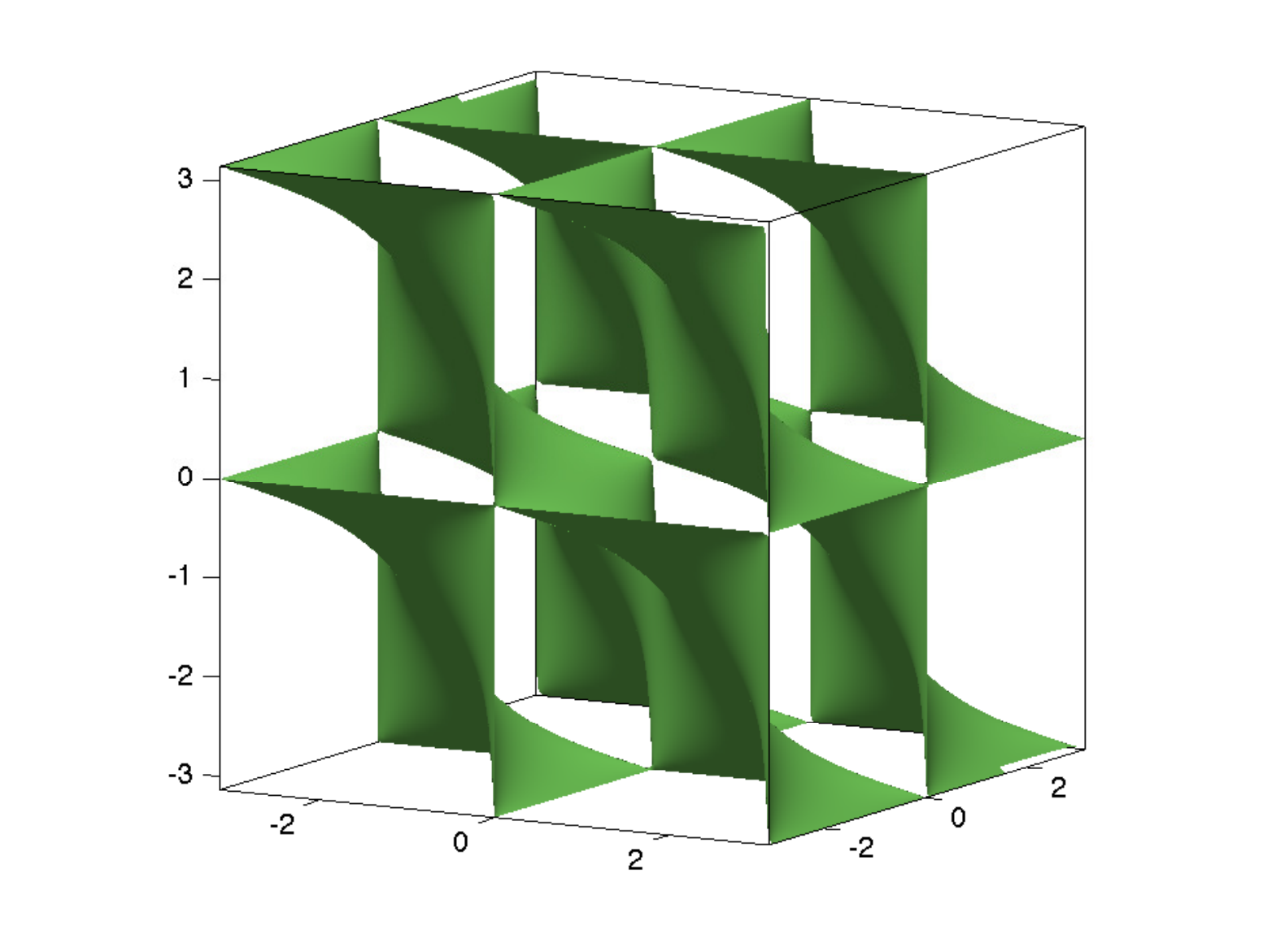}
    \includegraphics[scale=0.5]{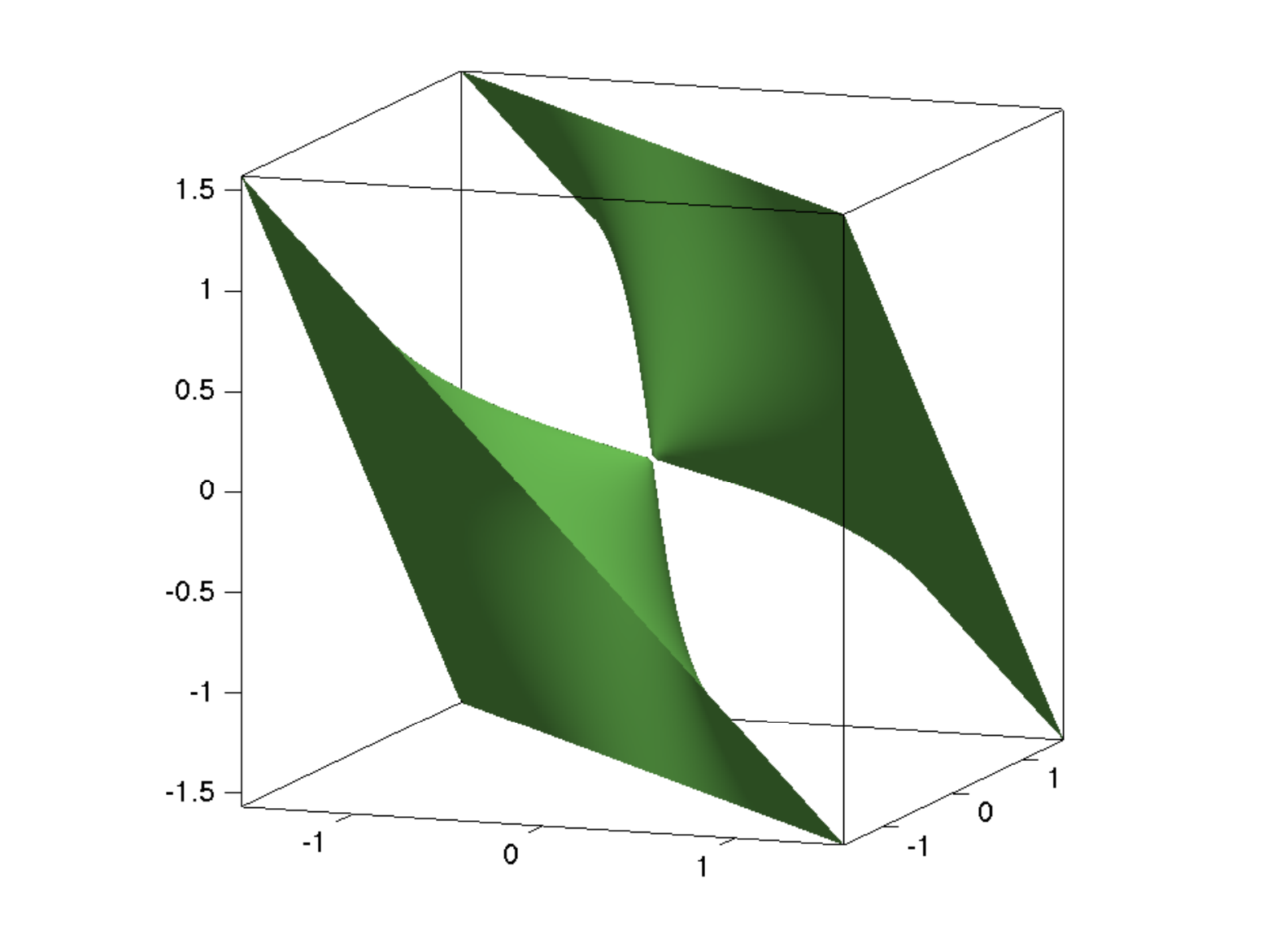}
  }
  \caption{Secular manifold of the star graph shown over
    $[-\pi,\pi]^3$ (left).  Of the four sheets visible, the first and
    third are parts of the same sheet by the torus periodicity; same
    for the sheets two and four.  A detail of the plot over
    $[-\pi/2,\pi/2]^3$ is shown on the right.}
  \label{fig:star_sec_manif}
\end{figure}

\begin{example}
  Consider a star graph with three edges $(v_0,v_1)$, $(v_0,v_2)$ and
  $(v_0,v_3)$ (see Fig.~\ref{fig:star_and_mandarin}) with NK
  condition at the central vertex $v_0$ and Dirichlet conditions at
  the leaves $v_1$, $v_2$, $v_3$.  There are three edges and therefore
  three torus variables.  It can be shown
  \cite{BarGas_jsp00,BerKea_jpa99} that the secular function of such a
  graph has the form
  \begin{equation}
    \label{eq:secdet_3starD}
    \Phi_{\mathrm{star,D}}(\vec\kappa) =
    \sum_{j=1}^3 \sin(\kappa_j)\sin(\kappa_{j+1})\cos(\kappa_{j+2}),
  \end{equation}
  where indices of $\kappa$ are taken modulo 3.  The secular manifold
  for the graph is shown in Figure~\ref{fig:star_sec_manif}.  There
  are four sheets visible but they match pairwise under the torus
  periodicity.  The two sheets touch each other through the conical
  points of non-smoothness.

  Similar pictures result if we consider the star graph with Neumann
  conditions at the leaves, which results in the secular determinant 
  \begin{equation}
    \label{eq:secdet_3starN}
    \Phi_{\mathrm{star,N}}(\vec\kappa) =
    \sum_{j=1}^3 \cos(\kappa_j)\cos(\kappa_{j+1})\sin(\kappa_{j+2})
   = \Phi_{\mathrm{star,D}}(\vec\kappa - \pi/2).
  \end{equation}
\end{example}

\begin{proof}[Proof of Theorem~\ref{thm:connected}]
  Choose the edge lengths in such a way that
  \begin{enumerate}
  \item \label{item:simple_nonzero} the eigenvalue spectrum of
    $\Gamma$ is simple and eigenfunctions do not vanish at vertices,
  \item \label{item:incommensurate} the edge lengths are rationally
    independent.
  \end{enumerate}
  Denote the vector of the edge lengths by $\vec\ell_0$.

  Condition (\ref{item:incommensurate}) implies that the flow $k \mapsto
  k\vec\ell_0$ is ergodic on the torus, and its intersections with the
  secular manifold are dense in it.  The closure of the odd-numbered
  intersections (within the set of smooth points of $\Sigma_\Gamma$)
  forms one component and the even-numbered intersections, the other.
  We will prove that they are mutually disjoint and connected.

  It is known (see, for example, \cite{Ban_ptrsa14} or
  \cite{CdV_ahp15}) that the gradient of $\Phi_\Gamma$ has either all
  non-negative components or all non-positive components.  Since $k_n
  = \sqrt{\lambda_n}$ are simple roots of the real-valued function
  $F_\Gamma(k) = \Phi_\Gamma(k\vec\ell_0)$, the derivatives
  $F'_\Gamma(k_n)$ alternate in sign and therefore the gradient is
  non-negative on one of the components we defined and non-positive on
  the other.  At a point of intersection of the two components, the
  gradient must vanish which contradicts the definition of the
  components.  

  Let now $\vec\kappa_1$ and $\vec\kappa_2$ be the two points on
  the same component (without loss of generality, take the component
  of even-numbered eigenvalues).  Surround them by small open
  neighbourhoods $U_1$ and $U_2$, such that $U_j \cap \Sigma_\Gamma$
  are connected and contain only smooth points of the same component.
  From definition of the components we can find two eigenvalues
  $\lambda_{2n_1} = k_1^2$ and $\lambda_{2n_2} = k_2^2$, $n_1<n_2$,
  such that $k_j\vec\ell_0 \in U_j \cap \Sigma_\Gamma$, $j=1,2$.  We
  now need to show that the points $k_1\vec\ell_0$ and $k_2\vec\ell_0$
  can be connected by a path on $\Sigma_\Gamma$ which does not pass
  through any singular points.

  Denote by $v$ the vertex of degree 1 and by $e_1$ the edge leading to it.
  The vertex condition at $v$ can be written as 
  \begin{equation}
    \label{eq:extended_delta}
    \cos(\theta/2) \frac{df}{dx_{e_1}}(v) = \sin(\theta/2) f(v),
  \end{equation}
  with $\theta = \theta_v$ equal to $0$ for Neumann and $\pi$ for
  Dirichlet.  Now we start with the eigenvalue $\lambda = k_2^2$ and
  continue it analytically by changing $\theta$.  According to
  \cite[Thm 6.1]{BerKuc_incol12} (see also \cite[Thm
  3.1.13]{BerKuc_graphs}), the eigenvalue $\lambda(\theta)$ is an
  analytic function unless there exists an eigenfunction of the graph
  $\Gamma$ which satisfies both Dirichlet and Neumann conditions (and
  thus any other $\delta$-type conditions) at the vertex $v$.  Such an
  eigenfunction would have to be identically zero on the edge $e_1$,
  which we ruled out in condition (\ref{item:simple_nonzero}) above.
  Therefore, $\lambda(\theta)$ is analytic and passes through every
  eigenvalue of $\Gamma$ at the points $\theta = 2\pi n + \theta_v$,
  $n\in\Z$, $n \geq n_0$.  In particular, for $n= \tilde{n} :=
  2(n_1-n_2) < 0$ we will have $\lambda(\theta) = k_1^2$.

  We will now map this $\lambda$-path, parametrized by $\theta$
  decreasing from $\theta_v$ to $2\pi \tilde{n} + \theta_v$, to a path
  on the secular manifold $\Sigma_\Gamma$.  Starting with an
  eigenfunction $f$ on $\Gamma$ with the vertex condition at $v$
  specified by $\theta$, we prolong the edge $e_1$ to have the length
  \begin{equation}
    \label{eq:loose_edge_length}
    \ell_{e_1}(\theta) := \ell_{0,e_1} 
    + \frac{\theta_v - \theta}{2\sqrt{\lambda(\theta)}}.
  \end{equation}
  A direct calculation shows that the eigenfunction $f$ continued as a
  sine wave with the same amplitude past the old location of the
  vertex $v$ will satisfy condition (\ref{eq:extended_delta}) with
  $\theta=\theta_v$ at the new location of $v$.

  Define the vector-function $\vec\ell(\theta)$ by using
  (\ref{eq:loose_edge_length}) for the component corresponding to
  $e_1$ and keeping all other components equal to the
  corresponding components of $\vec\ell_0$.  The above discussion
  shows that the point $\vec\kappa(\theta) = \sqrt{\lambda(\theta)}
  \vec\ell(\theta)$ will remain on the secular manifold for all
  $\theta$ between $\theta_v$ and $2\pi \tilde{n} + \theta_v$ and will
  pass only through points of multiplicity $1$, which are exactly the
  points of smoothness.  This path connects the points $k_1\vec\ell_0$
  and $k_2\vec\ell_0$ as required.  The seeming mismatch of lengths at
  $\theta = 2\pi \tilde{n} + \theta_v$ is due to the modular
  arithmetic on the torus; here we use the fact that $\tilde{n}$ is
  even.
\end{proof}

\begin{remark}
  \label{rem:bridge}
  An identical theorem can be proved for a graph with a bridge, i.e.\
  an edge whose removal disconnects the graph.  The method of proof is
  the same with a point an the bridge being the location where the
  variable $\delta$-type condition is introduced and the resulting
  eigenfunction is related to the eigenfunction of the original graph
  with a longer edge.  Whether the same result holds for graphs
  without such edges (such as the tetrahedron graph --- the complete
  graph on 4 vertices) is still unknown.
\end{remark}

\begin{figure}
  \centering
  {
    \includegraphics[scale=0.5]{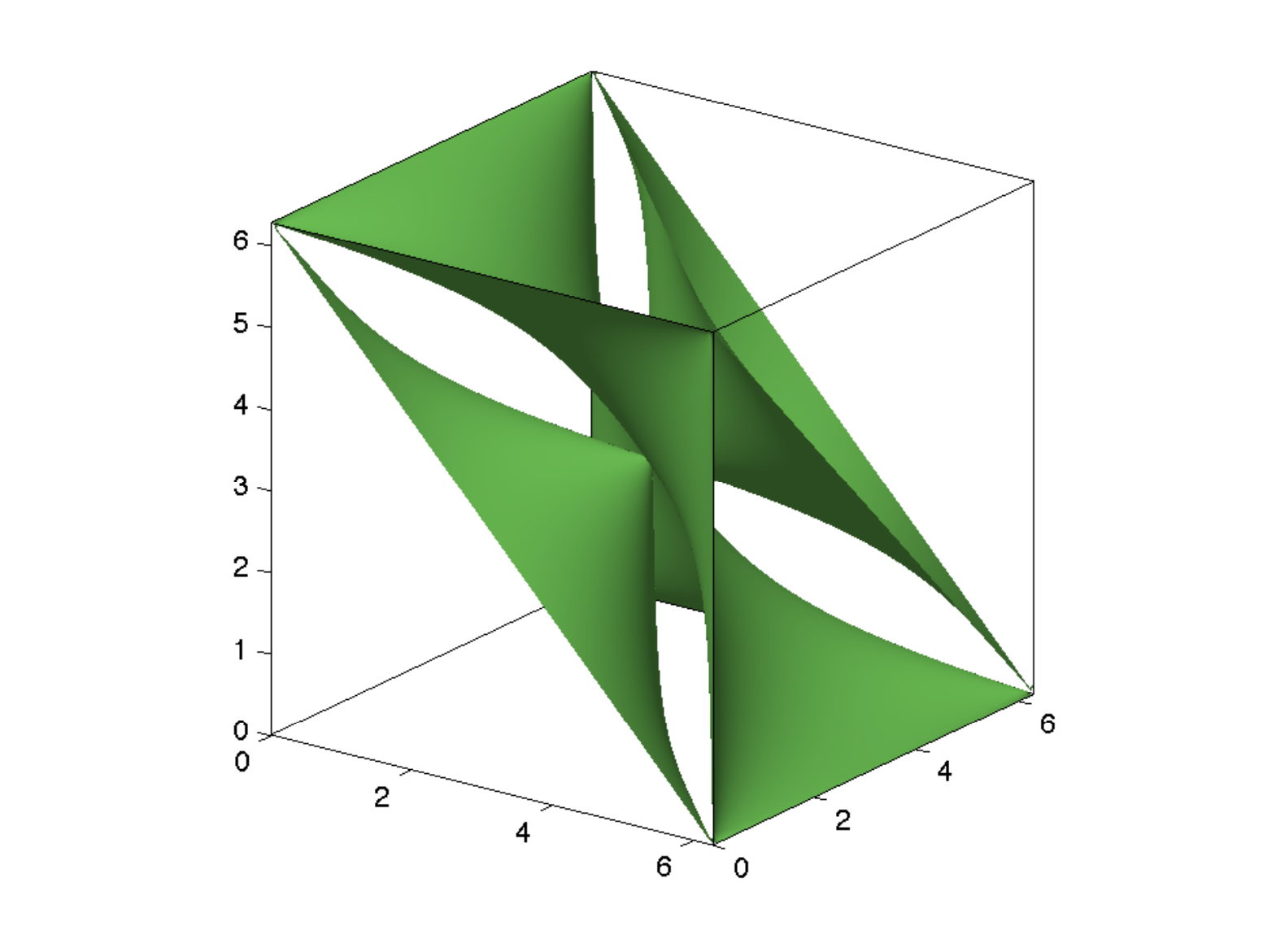}
    \includegraphics[scale=0.5]{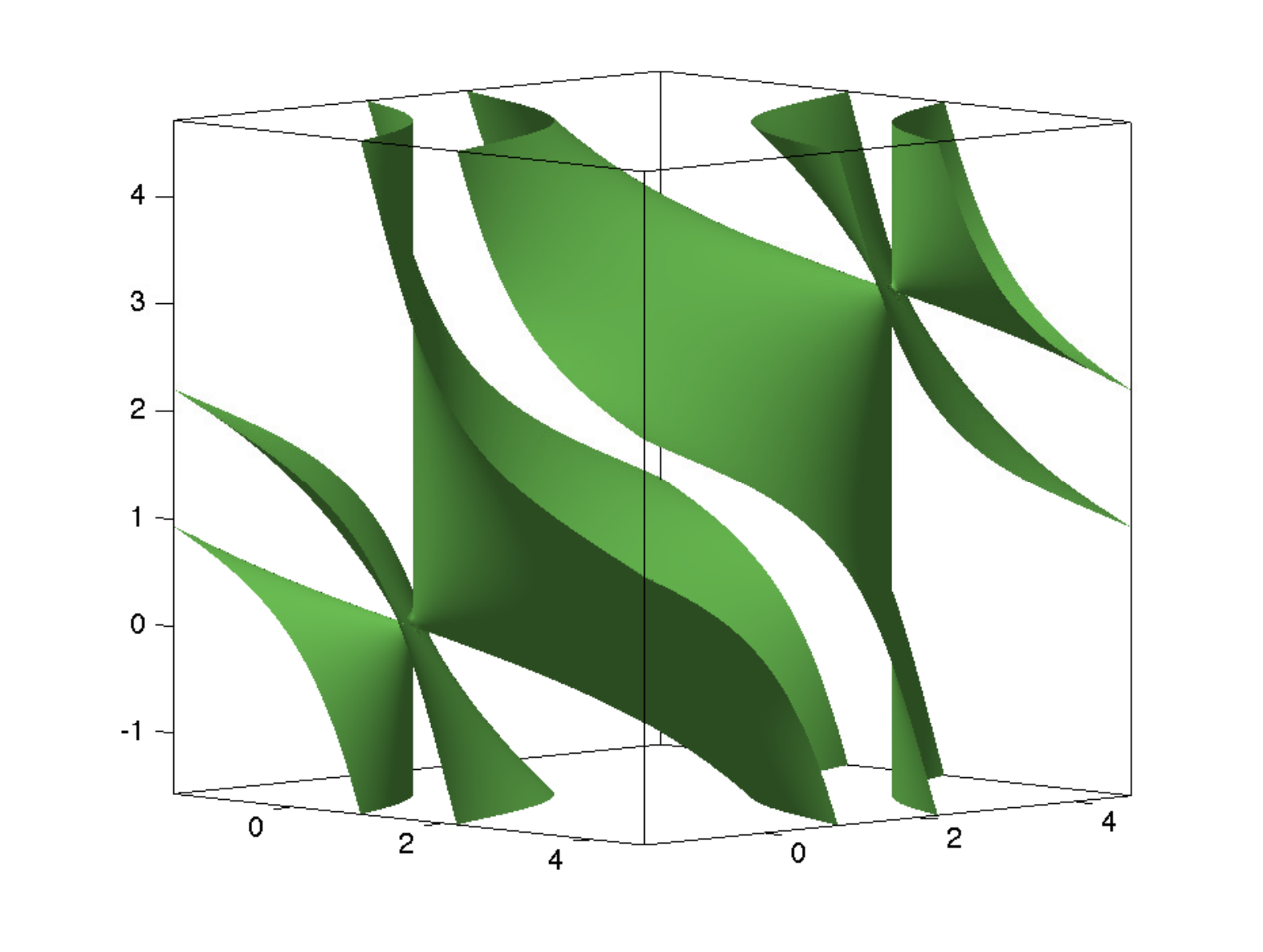}
  }
  \caption{Secular manifold of the mandarin graph shown over
    $[0,2\pi]^3$ (left) and, to provide another perspective, over
    $[-\pi/2, 3\pi/2]^3$ (right). 
  }
  \label{fig:mandarin_sec_manif}
\end{figure}

\begin{example}
  The following example shows that there is no direct link between
  reducibility of the secular manifold and its connectedness.  For the
  mandarin graph with three edges (see
  Fig.~\ref{fig:star_and_mandarin}), the secular determinant
  decomposes into a product of the secular determinants of Dirichlet
  and Neumann stars due to the reflection symmetry,
  \begin{equation*}
    \Phi_{\mathrm{mandarin}}(\vec\kappa) 
    = \Phi_{\mathrm{star,D}}(\vec\kappa/2) \cdot 
    \Phi_{\mathrm{star,N}}(\vec\kappa/2),
  \end{equation*}
  see equations (\ref{eq:secdet_3starD}) and
  (\ref{eq:secdet_3starN}).  Note that while the conditions of
  Theorem~\ref{thm:connected} (or Remark~\ref{rem:bridge}) are not
  satisfied, the secular manifold still has two connected components,
  see Fig.~\ref{fig:mandarin_sec_manif}.
\end{example}

\section*{Acknowledgment}

Both authors were partially supported by the NSF grant DMS-1410657.
We thank the referee for numerous improving suggestions and for
helping us identify and close a significant gap in the proof.

\bibliographystyle{plain}
\bibliography{bk_bibl}

\end{document}